\documentclass[12pt, reqno]{amsart}

\usepackage{stackrel}
\usepackage{amsmath, amssymb, amsthm, amsfonts}
\usepackage{graphicx}
\usepackage{verbatim}
\usepackage{natbib}
\usepackage{enumitem}
\usepackage{epigraph}
\usepackage{multicol} 

\usepackage[xcharter]{newtxmath}

\usepackage{caption}
\usepackage{subcaption}

\usepackage{fancyvrb}
\usepackage[dvipsnames,svgnames,table]{xcolor}
\usepackage{mdwlist}

\usepackage[breaklinks=true, colorlinks=true, citecolor=Blue, linkcolor=blue]{hyperref}
\newcommand\myshade{85}
\colorlet{mylinkcolor}{violet}
\colorlet{mycitecolor}{Brown}
\colorlet{myurlcolor}{violet}
\hypersetup{
  linkcolor  = mylinkcolor!\myshade!black,
  citecolor  = mycitecolor!\myshade!black,
  urlcolor   = myurlcolor!\myshade!black,
  colorlinks = true,
}
\usepackage[capitalise]{cleveref}
\crefname{theorem}{Theorem}{Theorems}
\crefname{corollary}{Corollary}{Corollaries}
\crefname{lemma}{Lemma}{Lemmas}
\crefname{proposition}{Proposition}{Propositions}
\crefname{definition}{Definition}{Definitions}
\crefname{example}{Example}{Examples}
\crefname{remark}{Remark}{Remarks}
\crefname{assumption}{Assumption}{Assumptions}

\usepackage{tikz}
\usepackage{tikz-cd}
\usetikzlibrary{positioning}
\usetikzlibrary{arrows}
\usetikzlibrary{calc}
\usetikzlibrary{intersections}
\usetikzlibrary{matrix}
\usetikzlibrary{decorations}
\usepackage{pgf}
\usepackage{pgfplots}
\pgfplotsset{compat=1.16}
\usetikzlibrary{shapes, fit}
\usetikzlibrary{arrows.meta}
\usetikzlibrary{decorations.pathreplacing}  
\usetikzlibrary{calligraphy}

\usepackage{pifont}

\usepackage{enumitem}
\setlist[enumerate]{itemsep=2pt,topsep=3pt}
\setlist[itemize]{itemsep=2pt,topsep=3pt}
\setlist[enumerate,1]{label={\upshape (\roman*)}}

\usepackage{bbm}

\usepackage[left=1.25in, 
            right=1.25in, 
            top=1.0in, 
            bottom=1.15in, 
            includehead, includefoot]{geometry}

\newcommand{\navy}[1]{\textit{\textcolor{NavyBlue}{#1}}}
\renewcommand{\leq}{\leqslant}
\renewcommand{\geq}{\geqslant}

\usepackage[ruled, linesnumbered]{algorithm2e}

\DeclareMathOperator{\fix}{fix}

\newcommand*\diff{\mathop{}\!\mathrm{d}}

\newcommand{\zZ}{\mathscr Z}

\newcommand{\RR}{\mathbbm R}

\newcommand{\NN}{\mathbbm N}

\newcommand{\EE}{\mathbbm E}

\newcommand{\Esf}{\mathsf E}

\newcommand{\Xsf}{\mathsf X}

\newcommand{\Zsf}{\mathsf Z}

\renewcommand{\phi}{\varphi}
\renewcommand{\epsilon}{\varepsilon}

\usepackage{aliascnt}

\theoremstyle{plain}
\newtheorem{theorem}{Theorem}[section]

\newaliascnt{corollary}{theorem}

\aliascntresetthe{corollary}

\newaliascnt{lemma}{theorem}
\newtheorem{lemma}[lemma]{Lemma}
\aliascntresetthe{lemma}

\newaliascnt{proposition}{theorem}
\newtheorem{proposition}[proposition]{Proposition}
\aliascntresetthe{proposition}

\theoremstyle{definition}

\newtheorem{assumption}{Assumption}[section]

\AddToHook{env/theorem/begin}{\crefalias{section}{theorem}}
\AddToHook{env/corollary/begin}{\crefalias{section}{corollary}}
\AddToHook{env/lemma/begin}{\crefalias{section}{lemma}}
\AddToHook{env/proposition/begin}{\crefalias{section}{proposition}}
\AddToHook{env/definition/begin}{\crefalias{section}{definition}}
\AddToHook{env/axiom/begin}{\crefalias{section}{axiom}}
\AddToHook{env/example/begin}{\crefalias{section}{example}}
\AddToHook{env/remark/begin}{\crefalias{section}{remark}}
\AddToHook{env/notation/begin}{\crefalias{section}{notation}}
\AddToHook{env/assumption/begin}{\crefalias{section}{assumption}}
\AddToHook{env/condition/begin}{\crefalias{section}{condition}}

\begin{document}

\title{Faithful Decoding}

\author{Nisha Peng}
\address{Research School of Economics, Australian National University}
\email{nisha.peng@anu.edu.au}

\author{John Stachurski}
\address{Research School of Economics, Australian National University}
\email{john.stachurski@anu.edu.au}

\author{Jingni Yang}
\address{School of Economics, University of Sydney}
\email{jingni.yang@sydney.edu.au}

\author{Ziyue Yang}
\address{Research School of Economics, Australian National University}
\email{humphrey.yang@anu.edu.au}

\thanks{The authors thank Shu Hu, Qingyin Ma, Fabio Angelo Maccheroni, Massimo Marinacci, Longye Tian, Thomas J. Sargent, and Shenghao Zhu for many valuable comments and suggestions.}

\begin{abstract}
    This paper studies transformations that increase efficiency in solving equilibrium
    systems without information loss. Our approach
    exploits order-theoretic structure commonly found in economic problems to
    obtain conditions under which high-dimensional systems can be transformed
    into low-dimensional systems while preserving exact relationships
    between their solutions. The transformations can also be used for 
    purposes other than dimensionality reduction, such as simplifying analysis
    and facilitating stochastic approximation routines.
    The theoretical ideas are illustrated using applications from economics and
    finance.  In a real option problem, we demonstrate speed gains of
    up to 70,000 times.
\end{abstract}

\date{July 18, 2026}

\maketitle

\section{Introduction}

Computational methods hold great promise for improving 
understanding of economic systems. Computationally intensive approaches to
economic modeling are enabling the addition of fine-grained heterogeneity
across agents, as well as many forms of adaptive behavior and behavioral biases.
In doing so, these approaches are
progressively shedding light on high-dimensional equilibrium objects, such
as wealth, consumption, and firm size distributions, as well as emergent
phenomena and aggregate implications of micro-level interactions
(see, e.g., \cite{gu2024global}, \cite{payne2025deep}, \cite{yang2025structural}, 
and \cite{wibault2026recurrent}).

A core challenge for researchers is handling the growing complexity of these
large-scale and increasingly high-dimensional models. A common strategy is to
seek low-dimensional representations of high-dimensional equilibrium objects.
One approach is discretization and state aggregation (see, e.g.,
\cite{krusell1998income}, \cite{farmer2017discretizing} or \cite{li2025error}).
Other methods that reduce computational burden include parameterization and power-series methods 
for invariant manifolds, adaptive sparse-grid interpolation, 
projection and collocation methods based on polynomial or spline bases, 
endogenous-grid methods for distributional dynamics (see, e.g., \cite{gomis2009geometric},
\cite{brumm2017using}, \cite{pohl2018higher}, \cite{bayer2024shocks}, or
\cite{ocampo2024computing}). These strategies are broadly applicable and
indisputably important. At the same time, they are all ``lossy:'' information is
degraded in the transformation to a lower-dimensional system.  In applying them,
researchers trade accuracy for compression and computational performance.

In contrast, this paper examines pairs of transformations of dynamical systems
and equilibrium objects that reduce dimensionality in optimization and
equilibrium problems while also being exact.  In particular, the first
transformation, which we refer to as the \emph{encoder}, produces a
lower-dimensional system where computation is efficient and accurate. The
reverse transformation, which we call the \emph{decoder}, transforms solutions
to the lower-dimensional system back into exact solutions for the
higher-dimensional system (modulo floating point error). We also provide
conditions on the encoder and decoder under which a sequence of approximations
to the solution in the low-dimensional system converges, after a transformation,
to a solution to the high-dimensional system.

The methods described here are not unambiguous improvements over the lossy
methods described above (orthogonal projection, state aggregation, etc.),  since
the set of potential applications is narrower. At the same time, there are many
useful settings where perfect decoding is achievable.  Moreover, the
transformations considered here are useful for a range of additional purposes,
beyond dimensionality reduction. For example, in Section~\ref{ss:hb}, we study a
nonlinear system that arises in the theory of asset pricing with heterogeneous
beliefs. One approach to solving the system is stochastic approximation, which
is increasingly used in computational science to solve systems in the presence
of high dimensionality, limited information, or both (see, e.g.,
\cite{mnih2015human}). For the original system, sampled updates are
upward-biased due to a ``max-of-sample-means'' issue \citep{hasselt2010double}.
We show that a transformation of the system makes stochastic approximation
asymptotically convergent. We can then recover the solution to the original
problem from the asymptotic limit of the stochastic approximation scheme, after
reversing the original transformation.

Another example of the value of our theory is given in Section~\ref{s:egf},
where we treat a real option problem.  We show that, in addition to
dimensionality reduction, the encoder shifts us to a new system where the
associated fixed point operator is a contraction mapping, even though the
original operator is not. This demonstrates that the transformations we study
can produce valuable analytical results as well.

Our encoder-decoder method relies on transformations of operators
and, as such, is related to topological conjugacy of dynamical systems. Operators $S$
and $T$ are said to be topologically conjugate if there exists a homeomorphism
$E$ from the domain of $S$ to the domain of $T$ such that $E \circ S = T \circ
E$. When this holds, the fixed points of $S$ and $T$ are in one-to-one
correspondence, and local and global stability properties of fixed points are
also transferred (see, e.g., \cite{devries2014topological}, or
\cite{devaney2018introduction}). 

While topological conjugacy provides beautiful and important results, it does not
offer a practical methodology for dimensionality reduction: Brouwer's
invariance-of-domain theorem (see, e.g., Theorem 62.3 of \cite{munkres2000topology}) implies that $\RR^n$
admits no continuous injective map into $\RR^m$ when $n > m$. To circumvent this problem, we may instead
consider weaker versions of conjugacy, such as the existence of maps $E, D$ such that
\begin{equation}\label{eq:oscpo0}
    E \circ S  = T \circ E
    \quad \text{and} \quad
    D \circ T = S \circ D .
\end{equation}
If either $E$ or $D$ is a homeomorphism, then we are back to topological
conjugacy.  If not, we might hope to obtain dimensionality reduction while still
retaining a connection between fixed points. A number of previous studies have
considered such ideas. For example, topological semiconjugacy requires only
surjective mappings in \eqref{eq:oscpo0} while still providing insights into
dynamics (see, e.g., \cite{day2008algorithms}, \cite{devaney2018introduction} or
\cite{redman2022algorithmic}). However, these conditions are insufficient to
uniquely and exactly transfer fixed points back to the higher dimensional
system.  

Here we take a related but alternative approach that drops the need for
bijective transformations while, at the same time, strengthening the condition
in \eqref{eq:oscpo0} to obtain an exact one-to-one relationship between fixed
points, as well as related convergence results. While our strengthening of
condition \eqref{eq:oscpo0} is nontrivial, we show that many interesting
applications satisfy our conditions. 

As well as exact transformations of solutions, we obtain convergence and
stability results when the encoder and decoder are continuous. For example, if
we can establish either local or global stability of fixed points in the
transformed system (which is typically easier to analyze), that same property is
transferred to the decoded fixed point in the original system.

Furthermore, in Section~\ref{s:opts} we develop results for applications with
monotonicity properties.  For example, we show that monotone shifts in
equilibrium objects in low-dimensional systems translate into monotone shifts in
semiconjugate high-dimensional systems. These results are valuable for fields
such as finance and operations research, since, in these application domains,
many problems have natural order-theoretic structure.  For example, in
intertemporal optimization problems, current values of reward streams are
typically (weakly) increasing in future values.  In asset pricing, current
values of assets are typically increasing in expected future values of assets.
In production network problems, prices of goods in one part of the production
chain are typically increasing in the price of goods in the rest of the chain,
and so on.

Regarding related literature, our paper builds on earlier work by authors such as
\cite{bertsekas2012dynamic} and \cite{ma2021dynamic}. These authors study
transformations of optimality equations for dynamic programs in a way that
reduces state space dimension without causing information loss. (The analysis in
\cite{bertsekas2012dynamic} considers integrating out ``uncontrollable states''
to reduce dimensionality.) Similar ideas have been applied in many settings,
including structural estimation \citep{rust1987optimal, hotz1993conditional,
iskhakov2016comment} and optimal stopping \citep{ma2019optimal}. Here we extend
these ideas by constructing a more general and systematic framework for handling
equilibrium and fixed point problems through ``encoding'' and ``decoding.''  Our
convergence and fixed point results extend these ideas to a broader range of
applications.  

The paper proceeds as follows. Section~\ref{s:egf} presents a real option
example that illustrates our main ideas and computational gains.
Section~\ref{s:r} develops the general theory of strong semiconjugacy and fixed
point transfer. Section~\ref{s:fa} provides further applications, including a turn-based zero-sum
stochastic game as well as asset pricing problems, one of which features
heterogeneous beliefs and a stochastic-approximation-based solution method.  Section~\ref{s:opts} extends
our theory by considering parametric monotonicity results. Implementation
details and additional results are given in the appendix.

\section{Example: A Real Option Problem}\label{s:egf}

This section provides a relatively simple example that helps illustrate the key
ideas. The application involves real options, which capture the value of
flexibility to adapt investments amid uncertainty, such as delaying decisions
(value of waiting) until new information resolves risks.\footnote{Recent
    research on real options includes applications to climate adaptation
    pathways for transit infrastructure \citep{martello2024}, agricultural
    investments under policy uncertainty \citep{stroombergen2022}, renewable
    power-to-hydrogen sequencing \citep{fabianek2024}, and market entry
balancing growth and deferral options amid endogenous and exogenous
uncertainties \citep{li2007, fajgelbaum2017uncertainty}.}  Here we focus on a
market entry problem closely related to \cite{fajgelbaum2017uncertainty},
although the same ideas apply to many other real option problems. We write down
an operator such that the fixed point of this operator describes the expected
present value of a firm with an option to enter a market over all states of the
world. Next, we will consider a closely related operator that acts on functions
taking values in a lower-dimensional space.   We will show, in addition, that
the fixed point of the low-dimensional operator can be mapped back to the
higher-dimensional space to solve the original problem.

\subsection{Setup}\label{ss:fv}

Consider a firm that can choose to enter a market and receive a given
payoff or wait until the next period and reassess.
The firm makes its decision at time $t$ after
observing the ``state of the world'', represented by
a joint state $X_t$ that contains all relevant information on
macroeconomic and market-specific shocks.

Given current state $X_t$, which takes values in set $\Xsf$, the payoff to
entering is given by $\pi(X_t)$. The interest rate $r_t = r(X_t)$ evolves
stochastically over time, also depending on the current joint state $X_t$.  Let
$(V_t)_{t \geq 0}$ be the value of the firm at each point in time.  This
sequence obeys the recursion
\begin{equation}\label{eq:vrec}
    V_t = 
    \max
	\left\{
        \pi(X_t), \; -c(X_t) + \frac{1}{1 + r(X_t)} \EE[ V_{t+1} \,|\, X_t]
	\right\},
    \qquad t = 0, 1, \ldots,
\end{equation}
where $c(X_t)$ is the flow cost of continuing to develop the product in state
$X_t$.  If the sequence $(V_t)$ can be computed, the optimal choice of whether
or not to exercise the entry option reduces to choosing the maximal element
on the right-hand side of \eqref{eq:vrec}.  For example, if $\pi(X_t)$ is
larger, the firm should exercise.

In view of the Markov structure, we conjecture that a solution $(V_t)_{t \geq 0}$
to this recursion must have the form $V_t = v(X_t)$ for some real-valued
function $v$.  Working with this conjecture leads us to rewrite \eqref{eq:vrec}
in the functional form
\begin{equation}\label{eq:lf}
    v(x) = \max
	\left\{
        \pi(x),  \; - c(x) + \beta(x) \int v(x') N(x, \diff x') 
	\right\},
\end{equation}
where 
\begin{equation*}
    \beta(x) \coloneq \frac{1}{1 + r(x)}  
    \qquad (x \in \Xsf),
\end{equation*}
is the state-dependent discount factor and $N$ is the stochastic kernel for the
state process $(X_t)_{t \geq 0}$.
We understand \eqref{eq:lf} as a functional equation in unknown function $v$.
    
A natural way to solve for the lifetime value function $v$ is to introduce an operator $S$ 
that maps $v$ to $Sv$ via
\begin{equation*}
	(Sv)(x) = \max
        \left\{
            \pi(x), \; -c(x) + \beta(x) \, \int v(x') N(x, \diff x')
        \right\}.
\end{equation*}
Clearly $v$ is a solution to \eqref{eq:lf} if and only if $v$ is a fixed point of
$S$.  While we can attempt to solve for a fixed point of $S$ directly, there are
settings where we can solve the problem more efficiently, by reducing
dimensionality.

To illustrate one such setting, suppose now that the state process $X_t$
consists of both persistent and transient components, as is common in
applications. In particular, suppose that $X_t$ has the form $X_t = (Z_t, \xi_t)$, where
\begin{itemize}
    \item $(Z_t)_{t \geq 0}$ is a persistent vector-valued process driven by
        stochastic kernel $Q$, so that $Q(z, \cdot)$ is the distribution of
        $Z_{t+1}$ given $Z_t = z$, and
    \item $(\xi_t)_{t \geq 0}$ is a transient vector-valued process, independent of $(Z_t)_{t \geq 0}$.
\end{itemize}
We suppose that the transient shock sequence $(\xi_t)_{t \geq 0}$ is {\sc iid}
with common distribution $\phi$.
The sequence $(Z_t)$ takes values in $\Zsf \subset \RR^k$, while $(\xi_t)$ takes
values in $\Esf \subset \RR^\ell$. The set $\Xsf$ is the product space $\Zsf \times
\Esf$. The stochastic kernel $N$ for $(X_t)$ is connected to the new primitives
$Q$ and $\phi$ via $N(x, A \times B) = N((z, \xi), A \times B) =  Q(z, A) \phi(B)$.

We now introduce a second operator $T$ via
\begin{equation}\label{eq:fvt}
    (T w)(z) = \int \max
    \left\{
    \pi(z, \xi), \;  -c(z, \xi) + \beta(z, \xi) \, \int w(z') Q(z, \diff z')
    \right\} \phi(\diff \xi).
\end{equation}
Notice that, unlike $S$, the operator $T$ acts on real-valued functions defined
on $\Zsf$, rather than on real-valued functions defined on the larger space
$\Xsf = \Zsf \times \Esf$.  

To help us interpret the action of the operator $T$, we also introduce a related
map $D$ via
\begin{equation}\label{eq:cdg}
    (D w)(z, \xi) = \max
    \left\{
        \pi(z, \xi), \; -c(z, \xi) + \beta(z, \xi) \, \int w(z') Q(z, \diff z')
    \right\}.
\end{equation}
We call $D$ the \navy{decoder map} for the pair $S,T$.  We show below that,
under some mild discounting and regularity assumptions, the following results
are true:
\begin{enumerate}
    \item The map $T$ has a unique fixed point $\bar w$
        within a suitable class of real-valued functions over $\Zsf$.
    \item As a \emph{consequence} of this fact, the map $S$ has a unique fixed point 
        within a suitable class of real-valued functions over $\Xsf$.
    \item The \navy{decoded image} $\bar v \coloneq D \bar w$ of the fixed point $\bar w$ of $T$ 
        is exactly the unique fixed point  of $S$ (and hence the unique solution to the
        real option problem).
    \item For any $w$ from a large class of initial conditions, the sequence
        $T^j w$ converges to the fixed point $\bar w$ of $T$.
    \item For any such $w$, the \navy{decoded sequence} $D T^j w$ converges to
        the unique fixed point $\bar v$ of $S$.
\end{enumerate}
There are many useful implications of these results.  One is that, for suitably
chosen $w$, we can iterate with the low-dimensional operator $T$ and then decode
the last iterate to approximate the solution to the firm problem at any level of
accuracy. The next section helps to illustrate the advantages of this approach.
The theoretical results discussed above are established in Section~\ref{s:r}.

\subsection{A Computational Illustration} \label{ss:comp_fv2}

To study efficiency gains from iterating with $T$ instead of $S$, we now
evaluate iteration for these two cases under a specific parameterization.
We suppose that the profit function is given by
\begin{equation}\label{eq:profit_function}
    \pi(z, \xi) = \max_{n \geq 0} 
    \left\{
        p(z, \xi) a(z, \xi) n^\theta - \omega(z, \xi) n
    \right\},
\end{equation}
where $p$ is the output price, $\omega$ is the wage rate, $a$ is a productivity term,
$n$ is labor input, $\theta \in (0,1)$ is a production elasticity, $z \in \RR^k$
is a vector of persistent shocks and $\xi \in \RR^\ell$ is a vector of transient
components. For the purpose of this experiment, we set $k = \ell = 4$.\footnote{In
the computations, the discount factor $\beta(z, \xi)$ is rescaled so that every
realization lies in $(\beta_{\min}, \beta_{\max}) \subset (0,1)$.  For the
functional forms we set $\beta(z_0, \xi_0) = z_0 + \xi_0$, $p(z_1, \xi_1) = z_1 +
\xi_1$,  $\omega(z_2, \xi_2) = z_2 + \xi_2$, $a(z_3, \xi_3) = z_3 + \xi_3$.}
The flow cost $c(z, \xi)$ is set to zero.

We set the persistent components to $Z_i = \exp(Y_i)$, where
$Y_i$ follows an independent Gaussian AR(1) process discretized by the
Tauchen method \citep{tauchen1986finite}. The pre-discretization dynamics follow
$Y_i' = \rho_i Y_i + \nu_i \varepsilon_i'$, where primes denote next period
values and each $\varepsilon_i'$ is an independent standard normal draw.
The transient shocks $\xi_i$ are {\sc iid} and drawn from a lognormal density
with parameters $(\mu_i, \sigma_i)$. We discretize the state space using a grid
with $n_z = (n_0, n_1, n_2, n_3)$ and $m_\xi = (m_0, m_1, m_2, m_3)$ points for
each component of $Z_t$ and $\xi_t$, respectively (e.g., the first component of
$Z_t$ is discretized onto a grid of size $n_0$).
\begin{figure}
    \centering
    \includegraphics[width=0.8\textwidth]{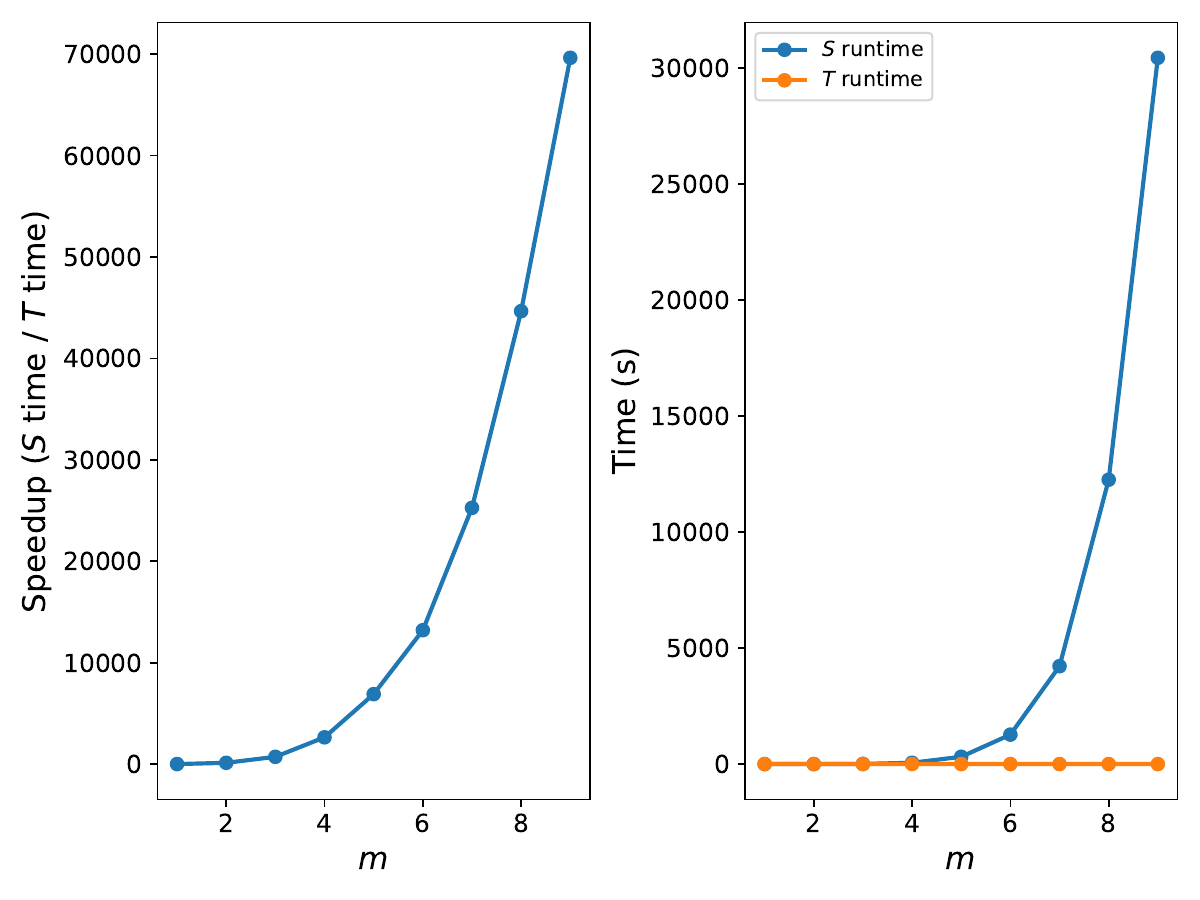}
    \caption{Left: Speed gain obtained by iterating with $T$ instead of $S$.
        Right: Raw computation times for the fixed points of $S$ and $T$ as a
        function of the common transient grid size $m_\xi= (m, m, m, m)$.}
    \label{fig:numba_speedup}
\end{figure}

\begin{table}
    \centering
    \caption{Performance results and accuracy validation (times in seconds)}
    \label{tab:results}
    \begin{tabular}{lrrrrr}
    \hline
    grid size $m$ & $t_S$ ($S$ time) & $t_T$ ($T$ time) & $t_S/t_T$ (speed gain) & $r_S$ & $r_T$ \\
    \hline
    1  & 0.012 & 0.0026 & 4.6 & $9.9 \times 10^{-7}$ & $9.6 \times 10^{-7}$ \\
    2  & 0.38 & 0.0030 & 128 & $7.1 \times 10^{-7}$ & $8.8 \times 10^{-7}$ \\
    3  & 6.14 & 0.0086 & 719 & $7.6 \times 10^{-7}$ & $7.9 \times 10^{-7}$ \\
    4  & 55.9 & 0.021 & 2,641 & $5.5 \times 10^{-7}$ & $6.8 \times 10^{-7}$ \\
    5  & 315 & 0.046 & 6,898 & $6.6 \times 10^{-7}$ & $8.6 \times 10^{-7}$ \\
    6  & 1,269 & 0.096 & 13,201 & $9.5 \times 10^{-7}$ & $7.1 \times 10^{-7}$ \\
    7  & 4,219 & 0.167 & 25,267 & $9.9 \times 10^{-7}$ & $6.8 \times 10^{-7}$ \\
    8  & 12,258 & 0.274 & 44,656 & $5.7 \times 10^{-7}$ & $6.0 \times 10^{-7}$ \\
    9  & 30,443 & 0.437 & 69,632 & $7.5 \times 10^{-7}$ & $6.0 \times 10^{-7}$ \\
    \hline
    \end{tabular}
\end{table}

\noindent
In all experiments we compare direct iteration on $S$ with iteration on $T$
using the same stopping tolerance $\varepsilon=10^{-6}$. Starting from fixed
initial iterates $v_0$ and $w_0$, for the high-dimensional problem we generate
$v_{q+1}=Sv_q$ until
$\lVert v_{q+1}-v_q\rVert_\infty<\varepsilon$; we record the runtime $t_S$ and
set $\hat v=v_{q+1}$. We repeat the same procedure for the low-dimensional
problem, generating $w_{j+1}=Tw_j$ until
$\lVert w_{j+1}-w_j\rVert_\infty<\varepsilon$; this yields runtime $t_T$ and
terminal iterate $\hat w=w_{j+1}$. The reported speedup is $t_S/t_T$. As
accuracy diagnostics we report the terminal successive-iterate differences
\begin{equation}\label{eq:residuals}
    r_S\coloneq\lVert v_{q+1}-v_q\rVert_\infty,
    \qquad
    r_T\coloneq\lVert w_{j+1}-w_j\rVert_\infty.
\end{equation}
Throughout, $n_i=4$ and $m_i=m$ for all $i$. Figure~\ref{fig:numba_speedup} and
Table~\ref{tab:results} report runtimes, speedups, and the accuracy diagnostics
in \eqref{eq:residuals}; Appendix~\ref{a:params} lists the parameter values.

The results show large speed gains that grow polynomially with $m$. For larger
$m$, $S$ operates on the full transient-shock grid
$(m_0\times m_1\times m_2\times m_3)=m^4$ and repeatedly integrates over
successor shocks, whereas $T$ grows more slowly because it separates the $z'$
expectation from the $\xi$ integration. Consequently, the speedup grows roughly
quartically with $m$, as shown in Figure~\ref{fig:numba_speedup} and
Table~\ref{tab:results}. At $m=4$, iterating with $T$ is already three orders of
magnitude faster than iterating with $S$. At $m=9$, it is around 70,000 times
faster under the same stopping tolerance.

Later, in the theory section, we show that the fixed point $\bar v$ of $S$ is
the decoding $D\bar w$ of the low-dimensional fixed point $\bar w$. Hence, with
$\hat v$ and $\hat w$ approximating $\bar v$ and $\bar w$, respectively, we
should have $\hat v\approx D\hat w$. To verify this, we compute the sup-norm
discrepancy
$$
    \delta \coloneq \max_{z,\xi} \bigl|D\hat w(z,\xi) - \hat v(z,\xi)\bigr|.
$$  
Across all grid sizes, $\delta<9.9\times10^{-7}$, indicating that the two
numerical fixed points agree within the stopping tolerance.

\section{Theory}\label{s:r}

This section provides a theoretical foundation for the transformation
discussed in Section~\ref{s:egf}, as well as for a range of additional 
applications discussed below.  We begin describing the theory with an informal overview in Section~\ref{ss:over}, and then move on to a more formal discussion,
starting in Section~\ref{ss:prel}.

\subsection{Overview}\label{ss:over}

Suppose that we are interested in the solution to a finite- or
infinite-dimensional system of potentially nonlinear equations. Let $V$ be a set
of candidate solutions and let $S \colon V \to V$ be an operator such that $v
\in V$ is a solution to the system if and only if $v$ is a fixed point of $S$.
We seek a potentially lower-dimensional set $W$, an operator $T \colon W \to W$,
and a set of conditions on these operators under which the high-dimensional
operator $S$ has a unique fixed point in $V$ if and only if the low-dimensional
operator $T$ has the same property, and under which any solution to the
lower-dimensional fixed point problem can be transformed back into an exact
solution for the higher-dimensional system.

The transformations we consider involve a decomposition.  In particular, we seek
maps $E \colon V \to W$ and $D \colon W \to V$ such that the
high-dimensional operator $S$ and the low-dimensional operator $T$ obey
\begin{equation}\label{eq:cojm0}
    S = D \circ E \text{ on } V
    \qquad \text{and} \qquad 
    T = E \circ D \text{ on } W.
\end{equation}
We show that this structure is enough
to exactly match fixed points across the low- and high-dimensional systems. We
also show that, under additional continuity conditions on $E$ and $D$,
stability properties of fixed points are also preserved: stability of a fixed
point in one system implies stability in the other.

Notice that \eqref{eq:cojm0} implies the semiconjugate relationship in \eqref{eq:oscpo0}.
Hence our approach is, in essence, a specialized version of semiconjugacy. 
As shown below, the strengthening of \eqref{eq:oscpo0}
to \eqref{eq:cojm0} allows us to retain and transfer exact qualitative and
quantitative fixed point information, as well as, in some settings, convergence results for approximation
methods.\footnote{Requiring only \eqref{eq:oscpo0} is not enough to deliver the
    same results.  For instance, consider sets $V$ and $W$, each with more than one element.
    Let $S$ be the identity map on $V$ and let $T$ be a self-map on $W$ which
    maps every $w \in W$ to the same point $w^* \in W$. 
    If we let $E \colon V \to W$ map every $v \in V$ to $w^*$ and $D \colon W
    \to V$ map every $w \in W$ to the same point $v^* \in V$, then
    \eqref{eq:oscpo0} holds. At the same time,   every $v \in V$ is a fixed
point for $S$ and $w^*$ is the only fixed point for $T$.}

We now formalize the ideas sketched above. We begin by
recalling some standard definitions from the theory of dynamical systems and
then introduce the notion of strong semiconjugacy.  After that, we state and
prove our main theoretical results on fixed point transfer and stability.

\subsection{Preliminaries}\label{ss:prel}

To proceed further, we first recall some useful concepts and definitions.
Let $V$ be a topological space. A \navy{fixed point} of a self-map $S$
on set $V$ is a $v \in V$ with $Sv=v$. 
Throughout this paper, the notation $\fix(S)$ refers to the set $\left\{v \in V \colon Sv = v \right\}$.
A \navy{dynamical system} is a pair $(V,
S)$ where $V$ is a set and $S$ is a self-map on $V$.  The system $(V, S)$ is
called \navy{globally stable} when $S$ has a unique fixed point $\bar v$ in $V$
and $S^n v \to \bar v$ as $n \to \infty$ for all $v \in V$.
A fixed point $\bar v$ of $S$ in $V$ is called \navy{locally stable} for $(V, S)$
if there exists a neighborhood $\hat V \subseteq V$ of $\bar v$ such that
$S^n v \to \bar v$ for all $v \in \hat V$.

Let $(V, S)$ and $(W, T)$ be dynamical systems, where $V$ and $W$ are
topological spaces. We call $(V, S)$ and $(W, T)$ \navy{strongly semiconjugate under
$E,D$} when there exist maps $E \colon V \to W$  and $D \colon
W \to V$ such that $S = D \circ E$  on $V$ and $T = E \circ D$ on $W$; that is,
such that \eqref{eq:cojm0} holds.  We call $E$ the \navy{encoder map} and $D$
the \navy{decoder map}. The ``strongly semiconjugate'' terminology comes from the fact
that, when \eqref{eq:cojm0} holds, so do the two equalities in
\eqref{eq:oscpo0}. 

Figure~\ref{fig:conj_semiconj} illustrates the difference between conjugacy,
where the encoder map $E$ is a bijection, and strong semiconjugacy.

\begin{figure}[ht]
    \centering
    \begin{minipage}[t]{0.48\textwidth} 
    \centering
    \caption*{\small Conjugacy}
    \vspace{0.25cm}
    \begin{tikzcd}[column sep=4em, row sep=4em] 
        V \arrow[r, "S"]
        \arrow[d, "E"', bend right=20, pos=0.51] 
        & V \arrow[d, "E", bend left=20, pos=0.51] \\ 
        W \arrow[r, "T"']
        \arrow[u, "E^{-1}"', bend right=20, pos=0.5] 
        & W \arrow[u, "E^{-1}", bend left=20, pos=0.5] 
    \end{tikzcd}
    \end{minipage}
    \begin{minipage}[t]{0.48\textwidth} 
        \centering
        \caption*{\small Strong semiconjugacy}
        \vspace{0.25cm}
    \begin{tikzcd}[column sep=4em, row sep=4em] 
        V \arrow[r, "S"]
        \arrow[d, "E"', bend right=20, pos=0.51] 
        & V \arrow[d, "E", bend left=20, pos=0.51] \\ 
        W \arrow[r, "T"']
        \arrow[u, "D"', bend right=20, pos=0.5] 
        & W \arrow[u, "D", bend left=20, pos=0.5] 
    \end{tikzcd}
    \end{minipage}
    \caption{Comparison of conjugacy and strong semiconjugacy}
    \label{fig:conj_semiconj}
\end{figure}

In the applications we study, a typical scenario has the following form: we
study a dynamical system $(V, S)$ and we wish to (a) show that $S$ has a unique
fixed point in $V$ and (b) propose a method for computing that fixed point.
Rather than tackle this problem directly, we aim to work with a strongly
semiconjugate system $(W, T)$, presumably motivated by the fact that $(W, T)$
is, for some reason, easier to work with than $(V, S)$. In particular, we aim to 
\begin{itemize}
    \item deduce the existence of a unique fixed point of $(V,S)$ only by
        studying $(W, T)$, and
    \item compute this unique fixed point, working only with the ``nice'' system
        $(W, T)$.
\end{itemize}
The theoretical results presented in the next section will help us execute this
and other closely related tasks.

\subsection{Theoretical Results}

Our first result provides foundations for the approach outlined above. 
In the statement of the theorem, $(V, S)$ and $(W, T)$ are arbitrary dynamical
systems.

\begin{theorem}\label{p:foo0}
    If $(V, S)$ and $(W, T)$ are strongly semiconjugate under $E, D$,
    then the following statements are true:
    \begin{enumerate}
        \item If $\bar v$ is a fixed point of $S$ in $V$, then $E\bar v$ is a fixed point
            of $T$ in $W$.
        \item If $\bar w$ is a fixed point of $T$ in $W$, then $D \bar w$ is a
            fixed point of $S$ in $V$.
        \item $\fix(S)$ and $\fix(T)$ have the same cardinality.
    \end{enumerate}
\end{theorem}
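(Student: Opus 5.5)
The plan is to use only the two identities $S = D \circ E$ on $V$ and $T = E \circ D$ on $W$ from \eqref{eq:cojm0}. No topology is needed, since the statement concerns arbitrary dynamical systems.

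For (i), take $\bar v$ with $S\bar v = \bar v$. Then
\begin{equation*}
    T(E\bar v) = E D E \bar v = E(S\bar v) = E\bar v .
\end{equation*}
Part (ii) is symmetric. If $T\bar w = \bar w$, then
\begin{equation*}
    S(D\bar w) = D E D \bar w = D(T\bar w) = D\bar w .
\end{equation*}
So $E$ restricts to a map $\fix(S) \to \fix(T)$, and $D$ restricts to a map $\fix(T) \to \fix(S)$.

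For (iii), I will show that these two restrictions are mutually inverse, which gives a bijection between $\fix(S)$ and $\fix(T)$. For $\bar v \in \fix(S)$ we have $D(E\bar v) = S\bar v = \bar v$. For $\bar w \in \fix(T)$ we have $E(D\bar w) = T\bar w = \bar w$. Hence $D|_{\fix(T)} \circ E|_{\fix(S)}$ is the identity on $\fix(S)$, and $E|_{\fix(S)} \circ D|_{\fix(T)}$ is the identity on $\fix(T)$. Both restrictions are therefore bijections, so the two sets have the same cardinality. This covers the empty case, the finite cases and the infinite cases simultaneously. In particular, $S$ has a unique fixed point exactly when $T$ does.

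I do not expect a genuine obstacle; the argument is a few lines of composition. The one step needing care is (iii). Part (i) alone only gives a map into $\fix(T)$, and the footnote example shows that semiconjugacy \eqref{eq:oscpo0} by itself does not make this map injective. The point to emphasize is that the strong identities force $D\circ E$ and $E \circ D$ to act as the identity on the fixed point sets, so $E$ and $D$ are inverse to each other there. This avoids any separate appeal to Schröder--Bernstein.
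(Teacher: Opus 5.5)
Your proof is correct and is essentially the paper's argument. Parts (i)--(ii) are identical; for (iii) the paper shows that $E$ maps $\fix(S)$ onto $\fix(T)$ via $E(D\bar w) = T\bar w = \bar w$ and injectively via $D(E\bar v) = S\bar v = \bar v$, which are exactly your two identities, so your statement that $E|_{\fix(S)}$ and $D|_{\fix(T)}$ are mutually inverse is a slightly tidier packaging that also names the inverse bijection explicitly.
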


\begin{proof}
    Let $(V, S)$ and $(W, T)$ be as stated. If $\bar v$ is a fixed point of
    $S$ in $V$, then $T E \bar v = E S \bar v = E \bar v$, so $E\bar v$ is a fixed point of
    $T$ in $W$.  Similarly, if $\bar w$ is a fixed point of
    $T$ in $W$, then $S D \bar w = D T \bar w = D
    \bar w$, so $D
    \bar w$ is a fixed point of $S$ in $V$. This proves (i)--(ii).

    Regarding (iii), we claim that $\fix(T) = E \fix(S)$. On the one hand, by (i),
    $E \fix(S) \subseteq \fix(T)$. On the other hand, for any given $w \in \fix(T)$,
    $T w = E D w = w$. By (ii), $v = D w \in \fix(S)$,
    and so $w \in E \fix(S)$ by definition. Hence, $E \colon \fix(S) \to \fix(T)$
    is well-defined and surjective. Now we show that $E \colon \fix(S) \to \fix(T)$
    is injective. For any given $v_1 \neq v_2 \in \fix(S)$, suppose that 
    $E v_1 = E v_2$. Applying $D$ to both sides yields $S v_1 = S v_2$. Since
    $v_1$ and $v_2$ are fixed points of $S$, this gives $v_1 = v_2$, a contradiction.
    Hence, $E \colon \fix(S) \to \fix(T)$ is bijective, i.e.,
    $\fix(S)$ and $\fix(T)$ have the same cardinality.
\end{proof}

Our next result shows that local stability transfers from $(W, T)$ to $(V, S)$
under strong semiconjugacy, provided that both the encoder and decoder maps are
continuous. 

\begin{theorem}\label{t:local}
    Let $(V, S)$ and $(W, T)$ be dynamical systems where $V$ and $W$ are both
    topological spaces. If
    \begin{enumerate}
        \item $(V, S)$ and $(W, T)$ are strongly semiconjugate under $E, D$,
        \item the decoder map $D$ is continuous on $W$,
        \item the encoder map $E$ is continuous on $V$, and
        \item $\bar w$ is locally stable for $(W, T)$,
    \end{enumerate}
    then $\bar v \coloneq D \bar w$ is also locally stable for $(V, S)$. Moreover, there
    exists a subset $W_1 \subset W$ such that
    \begin{equation}\label{eq:local}
        \text{for any $w \in W_1$},\qquad
        D T^n w \to \bar v \quad \text{as} \quad n \to \infty.
    \end{equation}
\end{theorem}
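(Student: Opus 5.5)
The plan is to pull the neighborhood of local stability for $\bar w$ back through the encoder $E$ and then move convergence between the two systems with the identities
\begin{equation*}
    S^{n+1} = D \, T^n E \text{ on } V
    \qquad \text{and} \qquad
    T^{n} D = D^{-1}\text{-free form: } D\,T^n = S^n D \text{ on } W ,
\end{equation*}
both of which follow by induction from \eqref{eq:cojm0}. Indeed, $S^{n+1} = (DE)^{n+1} = D(ED)^n E = D T^n E$, and $D T^n = D (ED)^n = (DE)^n D = S^n D$.

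First, I record that $\bar v \coloneq D\bar w$ is a fixed point of $S$ by \cref{p:foo0}(ii). I also note $E\bar v = ED\bar w = T\bar w = \bar w$.

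Next, let $\hat W$ be a neighborhood of $\bar w$ such that $T^n w \to \bar w$ for all $w \in \hat W$. Set $\hat V \coloneq E^{-1}(\hat W)$. This set contains $\bar v$ because $E\bar v = \bar w$. Since $E$ is continuous, $\hat V$ is a neighborhood of $\bar v$: it contains the preimage of an open set $U$ with $\bar w \in U \subseteq \hat W$, and that preimage is open. For $v \in \hat V$ we have $Ev \in \hat W$, so $T^n Ev \to \bar w$. Continuity of $D$ at $\bar w$ then gives
\begin{equation*}
    S^{n+1} v = D T^n E v \to D\bar w = \bar v ,
\end{equation*}
and a one-step index shift yields $S^n v \to \bar v$. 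This proves local stability of $\bar v$.

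For \eqref{eq:local}, I will take $W_1 \coloneq \hat W$ (any set of points whose $T$-orbits converge to $\bar w$ works). For $w \in W_1$, $T^n w \to \bar w$, so continuity of $D$ gives $D T^n w \to D\bar w = \bar v$. Alternatively, one can take $W_1 \coloneq D^{-1}(\hat V)$ and use $DT^n w = S^n Dw$, but the first choice is simpler.

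The only delicate point is the topological one: "neighborhood" must be read as a set containing an open set around the point, and the preimage argument needs continuity of $E$ (at least at $\bar v$). Continuity of $D$ is needed only at $\bar w$. Everything else is algebraic bookkeeping with the semiconjugacy identities, so I do not expect a genuine obstacle.
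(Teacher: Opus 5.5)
Your proposal is correct and follows essentially the same route as the paper: pull $\hat W$ back to the neighborhood $E^{-1}(\hat W)$ of $\bar v$ (using $E\bar v = ED\bar w = T\bar w = \bar w$), write $S^{n+1} v = D T^n E v$, apply continuity of $D$, and take $W_1 = \hat W$. Your extra care with non-open neighborhoods and pointwise continuity is a harmless refinement; note that your first display is garbled and should simply read $S^{n+1} = D T^n E$ on $V$ and $D T^n = S^n D$ on $W$.
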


\begin{proof}
    Let $(V, S)$ and $(W, T)$ be as stated. By local stability of $\bar w$, there exists
    an open neighborhood $\hat W \subset W$ of $\bar w$ such that $T^n w \to \bar w$ for all $w
    \in \hat W$. By (ii) of \cref{p:foo0}, $\bar v = D \bar w$ is a fixed
    point of $S$. By continuity of $E$, the set $E^{-1}(\hat W) \coloneq \left\{ v \in V \colon E v \in \hat W \right\}$ is open. We have $\bar v \in E^{-1}(\hat W)$ since $E \bar v = E D \bar w = T \bar w = \bar w \in \hat W$.
    Hence, $E^{-1}(\hat W)$ is a neighborhood of $\bar v$. Moreover, for each $v \in E^{-1}
    (\hat W)$, local stability of $\bar w$ and continuity of $D$ yield
    \begin{equation*}
        S^n v = (D E)^n v
        = D (E D)^{n-1} E v = D T^{n-1} E v \to \bar v.
    \end{equation*}
    Hence, $\bar v$ is locally stable for $(V, S)$ and $W_1 = \hat W$ is the subset
    satisfying \eqref{eq:local}.
\end{proof}

Next we provide a global stability result.  It allows us to (a) deduce the
existence of a unique fixed point of $(V,S)$ only by studying the ``nicer''
system $(W, T)$, and (b) compute this unique fixed point, working only with $(W,
T)$.

\begin{theorem}\label{t:sosft}
    Let $(V, S)$ and $(W, T)$ be dynamical systems where $V$ and $W$ are both
    topological spaces. If
    \begin{enumerate}
        \item $(V, S)$ and $(W, T)$ are strongly semiconjugate under $E, D$,
        \item the decoder map $D$ is continuous on $W$, and
        \item $(W, T)$ is globally stable with unique fixed point $\bar w$,
    \end{enumerate}
    then $(V, S)$ is also globally stable, with unique fixed point $\bar v
    \coloneq D \bar w$. Moreover,
    \begin{equation}\label{eq:wwtt}
        \text{for any $w \in W$},\qquad
        D T^n w \to \bar v \quad \text{as} \quad n \to \infty.
    \end{equation}
\end{theorem}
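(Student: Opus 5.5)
The plan is to combine \cref{p:foo0} with the factorization $S^n = D T^{n-1} E$ already used in the proof of \cref{t:local}. Continuity of $E$ is not needed, because global stability of $(W,T)$ supplies convergence from \emph{every} starting point in $W$, including every encoded point $Ev$.

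\emph{Uniqueness and identification of the fixed point.} Since $(W,T)$ is globally stable, $\fix(T) = \{\bar w\}$. Part (iii) of \cref{p:foo0} then shows that $\fix(S)$ is a singleton. Part (ii) shows that $D\bar w \in \fix(S)$. Hence the unique fixed point of $S$ is $\bar v \coloneq D\bar w$. For later use, note also that $E\bar v = ED\bar w = T\bar w = \bar w$, although this identity is not strictly needed.

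\emph{Convergence of decoded iterates.} I would first prove \eqref{eq:wwtt}. Fix $w \in W$. Global stability gives $T^n w \to \bar w$ in $W$. Since $D$ is continuous, $DT^n w \to D\bar w = \bar v$. If $W$ is not first countable, continuity should be read as preservation of convergent sequences; the statement is phrased in terms of sequences, so I take continuity to yield this directly, since continuous maps always preserve sequential limits.

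\emph{Global convergence of $S$.} Fix $v \in V$. By induction on $n$, using $S = DE$ and $T = ED$,
\begin{equation*}
    S^n v = (DE)^n v = D (ED)^{n-1} E v = D T^{n-1} (Ev), \qquad n \geq 1.
\end{equation*}
Applying \eqref{eq:wwtt} with $w = Ev \in W$ gives $S^n v \to \bar v$. Together with uniqueness, this shows that $(V,S)$ is globally stable.

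I do not expect a real obstacle here. The only points needing care are the index shift in $(DE)^n = D(ED)^{n-1}E$, which holds for $n \geq 1$ and can be checked by induction, and the observation that continuity of $E$ is unnecessary because convergence holds for arbitrary initial conditions in $W$.
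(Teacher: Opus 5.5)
Your proposal is correct and follows the paper's proof essentially step for step: uniqueness comes from parts (ii)--(iii) of \cref{p:foo0}, and convergence from $S^n v = D T^{n-1} E v$ combined with continuity of $D$; the only change is that you prove \eqref{eq:wwtt} first and then specialize to $w = Ev$. Your caveat about first countability is unnecessary, since a continuous map between arbitrary topological spaces always preserves convergent sequences.
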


Notice that in \eqref{eq:wwtt} we iterate only in $(W, T)$,
and finally transfer back to $V$ using the mapping $D$.

\begin{proof}[Proof of \cref{t:sosft}]
    Let $(V, S)$ and $(W, T)$ be as stated.
    To see the global stability of $(V, S)$, observe that, since
    $\bar w$ is the unique fixed point of $T$, (iii) of
    \cref{p:foo0} implies that $\bar v = D \bar w$ is the unique
    fixed point of $S$. Fix $v \in V$, global stability of $(W, T)$ and
    continuity of $D$ yield
    \begin{equation*}
        S^n v = (D E)^n v
        = D (E D)^{n-1} E v = D T^{n-1} E v \to \bar v.
    \end{equation*}
    Similarly, fix $w \in W$, global stability of $(W, T)$ and
    continuity of $D$ yield $D T^n w \to \bar v$.
\end{proof}

\subsection{Proofs of the Section~\ref{s:egf} Results} \label{ss:pfegf}

In this section, we return to the setting of Section~\ref{s:egf} and restate and
then prove some of the informal claims made regarding the real option
problem. In the process, we will show how Theorem~\ref{t:sosft} can be used to
confirm that iterating $T$ is an effective way to compute the unique fixed point
of $S$.

Recall from Section~\ref{ss:fv} that $S$ is given by \eqref{eq:lf}, that $T$ is
given by \eqref{eq:fvt}, and that $D$ is given by \eqref{eq:cdg}. To analyze
these maps, let $\ell \Xsf$ denote the space of Borel measurable functions
$h \colon \Xsf \to \RR$, modulo the equivalence relation $h \sim h'$ whenever
$h(z, \cdot) = h'(z, \cdot)$ $\phi$-a.e.\ for every $z \in \Zsf$, satisfying
\begin{equation}\label{eq:lbo}
    \sup_{z \in \Zsf} \int |h(z, \xi)| \phi(\diff \xi) < \infty,
\end{equation}
and let $b\Zsf$ denote the set of bounded Borel measurable functions in
$\RR^\Zsf$. We equip $\ell \Xsf$ with the norm \eqref{eq:lbo} and $b\Zsf$ with
the supremum norm $\|\cdot\|_\infty$; under \cref{a:fv} below, $S \colon \ell \Xsf \to \ell \Xsf$
and $T \colon b\Zsf \to b\Zsf$.
Let $E$ be the encoder map on $\Zsf$ defined by
\begin{equation}\label{eq:cdf}
    (Ev)(z) = \int v(z, \xi) \phi(\diff \xi).
\end{equation}

\begin{assumption}\label{a:fv}
	The following conditions hold:
	\begin{enumerate}
        \item the payoff $\pi$ and flow cost $c$ satisfy $\pi, c \in \ell\Xsf$,
        \item the state-dependent discount factor $\beta \colon \Xsf \to (0,
            \infty)$ is bounded and Borel measurable,
		\item there exists an $n \in \NN$ such that
        \begin{equation*}
			\sup_{z \in \Zsf} \EE_z^Q \prod_{t=0}^{n-1} \hat \beta(Z_t) < 1,
		\end{equation*}
        where $\hat \beta(z) \coloneq \int \beta(z, \xi) \phi(\diff \xi)$ and $\EE^Q$ is
        the conditional expectation operator with respect to stochastic kernel $Q$.
	\end{enumerate}
\end{assumption}

We can now state our main results for the real option problem. 

\begin{proposition}\label{p:fk}
    If \cref{a:fv} holds, then the real option problem has
    a unique solution $\bar v$ in $\ell \Xsf$. Moreover, for any $w \in b\Zsf$, we
    have
    \begin{equation}\label{eq:gtgt}
        D T^n w \to \bar v
        \quad \text{as} \quad
        n \to \infty.
    \end{equation}
\end{proposition}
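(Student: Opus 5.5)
The plan is to apply \cref{t:sosft} with $V = \ell\Xsf$, $W = b\Zsf$, $S$ from \eqref{eq:lf}, $T$ from \eqref{eq:fvt}, decoder $D$ from \eqref{eq:cdg} and encoder $E$ from \eqref{eq:cdf}. This requires three things: strong semiconjugacy, continuity of $D$, and global stability of $(b\Zsf, T)$.

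First I will check that the maps are well defined. For $w \in b\Zsf$, the function $Dw$ is bounded in absolute value by $|\pi| + |c| + \|\beta\|_\infty \|w\|_\infty$, so by \cref{a:fv}(i)--(ii) we get $Dw \in \ell\Xsf$. For $v \in \ell\Xsf$, \eqref{eq:lbo} gives $Ev \in b\Zsf$. The value $Ev$ does not depend on the representative of $v$, since representatives agree $\phi$-a.e.\ in $\xi$ for each $z$. Similarly, $Dw$ is an honest function, and $S$ respects the equivalence classes.

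Next I will verify the factorization. Using $N((z,\xi), \cdot) = Q(z,\cdot)\otimes\phi$ and Fubini,
\begin{equation*}
    \int v(x')N(x,\diff x') = \int\Bigl[\int v(z',\xi')\phi(\diff\xi')\Bigr] Q(z,\diff z') = \int (Ev)(z')Q(z,\diff z').
\end{equation*}
Hence $Sv = D(Ev)$. Directly from the definitions, $T w = E(Dw)$. So \eqref{eq:cojm0} holds.

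For continuity of $D$, I will use $|\max\{a,b\} - \max\{a,b'\}| \leq |b-b'|$. This gives $|Dw - Dw'|(z,\xi) \leq \beta(z,\xi)\|w-w'\|_\infty$, so the $\ell\Xsf$-norm of $Dw - Dw'$ is at most $\sup_z \hat\beta(z)\,\|w-w'\|_\infty$, and $D$ is Lipschitz.

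The main step is global stability of $T$ on $b\Zsf$. The same max inequality yields the pointwise bound
\begin{equation*}
    |Tw - Tw'|(z) \leq \hat\beta(z) \int |w-w'|(z')Q(z,\diff z').
\end{equation*}
Let $K$ denote the positive operator $(Kh)(z) = \hat\beta(z)\int h(z')Q(z,\diff z')$. Then $|Tw - Tw'| \leq K|w - w'|$. Since $K$ is positive and linear, induction gives $|T^n w - T^n w'| \leq K^n |w-w'|$. Unrolling $K^n \mathbbm 1$ gives
\begin{equation*}
    (K^n \mathbbm 1)(z) = \EE^Q_z \prod_{t=0}^{n-1}\hat\beta(Z_t).
\end{equation*}
By \cref{a:fv}(iii), $\|K^n \mathbbm 1\|_\infty < 1$, so $T^n$ is a contraction on the complete space $(b\Zsf, \|\cdot\|_\infty)$. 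Measurability of $Tw$ follows from Fubini. Banach's theorem for an operator with a contractive iterate then gives a unique fixed point $\bar w$ with $T^k w \to \bar w$ for all $w$. This is where I expect the only real care to be needed: checking that the $n$-step bound matches exactly the product in \cref{a:fv}(iii), with $\hat\beta$ appearing only after integrating $\xi$ at each step.

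\cref{t:sosft} then gives that $\bar v = D\bar w$ is the unique fixed point of $S$ in $\ell\Xsf$, that is, the unique solution of the real option problem, and that \eqref{eq:gtgt} holds.
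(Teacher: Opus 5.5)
Your proposal is correct and follows the paper's proof essentially step for step. It checks well-definedness of $E$ and $D$ (Lemma~\ref{l:gfmt}), the factorization $S = D \circ E$, $T = E \circ D$ (Lemma~\ref{l:stf}), Lipschitz continuity of $D$ (Lemma~\ref{l:gcont}) and global stability of $(b\Zsf, T)$ (Lemma~\ref{l:gs}), then concludes via Theorem~\ref{t:sosft}. The only difference is that the paper cites Theorem~2.1 of \cite{stachurski2021dynamic} for the eventual contraction, whereas you prove it directly through the positive operator $K$ and the identity $(K^n \mathbbm 1)(z) = \EE^Q_z \prod_{t=0}^{n-1}\hat\beta(Z_t)$, which makes your argument self-contained.
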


The proof of Proposition~\ref{p:fk} is given in stages below.
The existence and uniqueness claim is established by (i)
proving that the simple system $(b\Zsf, T)$ has a unique fixed point via an
eventual contraction property and (ii) translating this unique fixed point to a unique
fixed point for $S$ via the strong semiconjugacy described in Lemma~\ref{l:stf}.
Then we use continuity of $D$ and strong semiconjugacy to establish the
convergence results in \eqref{eq:gtgt}. These convergence results provide an
efficient method for computing $\bar v$, since all iteration takes place in the
``nice'' system $(b\Zsf, T)$.

The next lemma describes the domain and range of $E$ and $D$.

\begin{lemma}\label{l:gfmt}
    The maps $E$ and $D$ obey $E \colon \ell \Xsf \to b\Zsf$ and $D \colon b\Zsf \to \ell \Xsf$.
\end{lemma}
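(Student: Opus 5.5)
We verify the two mapping claims directly from the definitions \eqref{eq:cdf} and \eqref{eq:cdg}, checking measurability, well-definedness on equivalence classes, and the relevant boundedness or integrability condition.

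\emph{The encoder.} Take $v \in \ell\Xsf$. Since $v$ is Borel on $\Zsf \times \Esf$, Tonelli/Fubini applied to $v^+$ and $v^-$ shows that $z \mapsto \int v(z,\xi)\phi(\diff \xi)$ is Borel measurable on $\Zsf$. Each integral is finite by \eqref{eq:lbo}, so there is no $\infty-\infty$ issue. If $v \sim v'$, then $v(z,\cdot)=v'(z,\cdot)$ $\phi$-a.e.\ for every $z$, hence $Ev = Ev'$ pointwise and $E$ is well defined on classes. Finally
\begin{equation*}
    \sup_z |(Ev)(z)| \leq \sup_z \int |v(z,\xi)|\phi(\diff \xi) < \infty,
\end{equation*}
so $Ev \in b\Zsf$.

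\emph{The decoder.} Take $w \in b\Zsf$ and set $g(z) \coloneq \int w(z') Q(z,\diff z')$. Then $g$ is Borel, since $Q$ is a stochastic kernel and $w$ is bounded measurable, and $|g| \leq \|w\|_\infty$. Hence $h(z,\xi) \coloneq -c(z,\xi) + \beta(z,\xi) g(z)$ is Borel on $\Xsf$, and so is $Dw = \max\{\pi, h\}$. Using $|\max\{a,b\}| \leq |a|+|b|$ and the boundedness of $\beta$ from \cref{a:fv}(ii), we get
\begin{equation*}
    |(Dw)(z,\xi)| \leq |\pi(z,\xi)| + |c(z,\xi)| + \|\beta\|_\infty \|w\|_\infty .
\end{equation*}
Integrating against $\phi$ and taking the supremum over $z$ gives a bound by
\begin{equation*}
    \|\pi\| + \|c\| + \|\beta\|_\infty\|w\|_\infty < \infty,
\end{equation*}
using \cref{a:fv}(i) and the fact that $\phi$ is a probability measure. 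Thus $Dw \in \ell\Xsf$.

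The main care point is the equivalence-class bookkeeping. We must ensure that choosing a representative of $\pi$ or $c$ does not matter. Changing $\pi$ or $c$ on a set that is $\phi$-null in each $z$-section changes $Dw$ only on such a set, so the class of $Dw$ is well defined. Otherwise the argument is routine.
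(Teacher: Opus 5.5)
Your proof is correct and follows the same route as the paper: you bound $\sup_z |(Ev)(z)|$ by the $\ell\Xsf$ norm of $v$, and you write $Dw$ as the maximum of $\pi$ and $-c$ plus a bounded term. Your extra checks on measurability (via Tonelli and the kernel $Q$) and on independence from the choice of representatives of $v$, $\pi$ and $c$ are left implicit in the paper, and you handle them correctly.
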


\begin{proof}
    To see that $E \colon \ell \Xsf \to b\Zsf$, fix $v \in \ell \Xsf$.  Using
    the definition of $\ell\Xsf$, we have 
    \begin{equation*}
        \sup_{z \in \Zsf} |(Ev)(z)|
        = \sup_{z \in \Zsf} |\int v(z, \xi) \phi(\diff \xi)| 
        \leq \sup_{z \in \Zsf} \int |v(z, \xi)| \phi(\diff \xi) < \infty.
    \end{equation*}
    In particular, $Ev \in b\Zsf$.

    To see that $D \colon b\Zsf \to \ell \Xsf$, fix $w \in b\Zsf$.  The function
    \[
        -c(z, \xi) + \beta(z, \xi) \int w(z') Q(z, \diff z')
    \]
    is the sum of $-c \in \ell\Xsf$
    and a bounded function, hence it is in $\ell \Xsf$.
    Note that $\pi \in \ell \Xsf$. Since $Dw = \max\{\pi, \, -c + \beta \int w \, \diff Q\}$,
    we have $D w \in \ell \Xsf$.
\end{proof}

We use $E$ and $D$ to connect $S$ and $T$ as follows:

\begin{lemma}\label{l:stf}
    The systems $(\ell \Xsf, S)$ and $(b\Zsf, T)$ are strongly semiconjugate under $E,D$.
\end{lemma}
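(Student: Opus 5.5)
We verify the two identities $S = D \circ E$ on $\ell\Xsf$ and $T = E \circ D$ on $b\Zsf$ directly from the definitions \eqref{eq:lf}, \eqref{eq:fvt}, \eqref{eq:cdg} and \eqref{eq:cdf}. By Lemma~\ref{l:gfmt}, the compositions $D \circ E \colon \ell\Xsf \to \ell\Xsf$ and $E \circ D \colon b\Zsf \to b\Zsf$ are well defined. So it only remains to check pointwise equality, modulo the equivalence relation defining $\ell\Xsf$.

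For $S = D \circ E$, fix $v \in \ell\Xsf$ and $(z,\xi) \in \Xsf$. The key step is to rewrite the conditional expectation in $S$ using the product structure $N((z,\xi), \diff z' \times \diff \xi') = Q(z,\diff z')\phi(\diff \xi')$. Fubini's theorem then gives
\begin{equation*}
    \int v(x') N((z,\xi), \diff x')
    = \int \left[ \int v(z',\xi') \phi(\diff \xi') \right] Q(z, \diff z')
    = \int (Ev)(z') Q(z,\diff z').
\end{equation*}
Substituting this into the definition of $S$ shows that $(Sv)(z,\xi)$ equals the right-hand side of \eqref{eq:cdg} with $w = Ev$, that is, $(Sv)(z,\xi) = (DEv)(z,\xi)$. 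The representative chosen for $v$ does not matter. Changing $v(z',\cdot)$ on a $\phi$-null set for each $z'$ leaves $(Ev)(z')$ unchanged, so both sides are well defined on equivalence classes.

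For $T = E \circ D$, fix $w \in b\Zsf$ and $z \in \Zsf$. Applying $E$ from \eqref{eq:cdf} to $Dw$ from \eqref{eq:cdg} integrates the max expression over $\xi$ with respect to $\phi$ while $z$ is held fixed. This is literally \eqref{eq:fvt}, so $(EDw)(z) = (Tw)(z)$.

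The main technical point is justifying the Fubini/Tonelli step. We need joint measurability of $(z',\xi') \mapsto v(z',\xi')$, which holds because $v$ is Borel on $\Xsf$. We also need integrability. The bound \eqref{eq:lbo} gives $\int |v(z',\xi')|\phi(\diff\xi') \leq \|v\|$ uniformly in $z'$, so by Tonelli the iterated integral of $|v|$ against $Q(z,\cdot)\otimes\phi$ is at most $\|v\| < \infty$. Fubini therefore applies, and $Ev$ is Borel measurable in $z'$. The same measurability argument shows $Dw$ is jointly Borel, since $z \mapsto \int w \, \diff Q(z,\cdot)$ is measurable for a stochastic kernel $Q$ and bounded measurable $w$. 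Hence the integral in $E(Dw)$ is well defined, and it is finite by Lemma~\ref{l:gfmt}.
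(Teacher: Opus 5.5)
Your proposal is correct and follows the paper's proof: both verify $S = D \circ E$ on $\ell\Xsf$ and $T = E \circ D$ on $b\Zsf$ by substituting the definitions directly, using the product form of $N$ to split the integral. You also justify the Fubini step via the bound \eqref{eq:lbo} and check that everything is well defined on the equivalence classes of $\ell\Xsf$; these are details the paper leaves implicit.
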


\begin{proof}
    Fixing $v \in \ell \Xsf$, we have
    \begin{align*}
        (D E v)(z, \xi) &= \max
            \left\{
            \pi(z, \xi), \;  -c(z, \xi) + \beta(z, \xi) \int (E v)(z') Q(z, \diff z')
            \right\}
            \\
            &= \max
            \left\{
                \pi(z, \xi), \; -c(z, \xi) + \beta(z, \xi) \int \int v(z', \xi') \phi(\diff \xi') Q(z, \diff z')
            \right\}
            \\
            &= (Sv)(z, \xi).
    \end{align*}
    Moreover, for each $w \in b\Zsf$, we have
    \begin{align*}
        (E D w)(z) &= \int (D w)(z, \xi) \phi(\diff \xi)
        \\
        &= \int \max
        \left\{
            \pi(z, \xi), \; -c(z, \xi) + \beta(z, \xi) \int w(z') Q(z, \diff z')
        \right\} \phi(\diff \xi)
        \\
        &= (Tw)(z).
    \end{align*}
    This confirms that $S = D \circ E$ on $\ell\Xsf$ and $T = E \circ D$ on $b\Zsf$.
\end{proof}

\begin{lemma}\label{l:gs}
    If \cref{a:fv} holds, then $(b\Zsf, T)$ is globally stable.
\end{lemma}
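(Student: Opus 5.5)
We plan to show that $T$ is an eventual contraction on the complete metric space $(b\Zsf, \|\cdot\|_\infty)$: some iterate $T^n$ is a contraction. By the standard generalization of Banach's fixed point theorem, it then follows that $T$ has a unique fixed point $\bar w$ and $T^j w \to \bar w$ for every $w \in b\Zsf$, which is global stability.

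First, $T$ maps $b\Zsf$ into itself. For $w \in b\Zsf$, Lemma~\ref{l:gfmt} gives $Dw \in \ell\Xsf$. It also gives $E(Dw) \in b\Zsf$, and Lemma~\ref{l:stf} identifies $E(Dw)$ with $Tw$. Measurability of $Tw$ follows from Fubini/Tonelli, since the integrand is jointly measurable.

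Next comes the key pointwise estimate. We use the elementary inequality $|\max\{a,b\}-\max\{a,c\}| \leq |b-c|$ and the independence of the $\xi$-integration. For $w, w' \in b\Zsf$ and each $z$, this gives
\begin{equation*}
    |(Tw)(z) - (Tw')(z)| \leq \int \beta(z,\xi) \phi(\diff \xi) \int |w(z') - w'(z')| Q(z, \diff z') = \hat\beta(z) \, (Q|w-w'|)(z).
\end{equation*}
So $|Tw - Tw'| \leq K|w-w'|$ pointwise, where $K$ is the positive linear operator $(Kh)(z) = \hat \beta(z)\int h(z')Q(z,\diff z')$. Here $\hat\beta$ is bounded because $\beta$ is bounded.

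We then iterate. We claim $|T^n w - T^n w'| \leq K^n |w - w'|$, proved by induction. The inductive step applies the one-step bound to the pair $T^{n-1}w, T^{n-1}w'$ and then uses positivity (monotonicity) of $K$:
\begin{equation*}
    |T^n w - T^n w'| \leq K|T^{n-1}w - T^{n-1}w'| \leq K K^{n-1}|w-w'|.
\end{equation*}
By the Markov property of the chain generated by $Q$, unrolling $K^n$ gives
\begin{equation*}
    (K^n h)(z) = \EE^Q_z \prod_{t=0}^{n-1}\hat\beta(Z_t)\, h(Z_n).
\end{equation*}
Hence
\begin{equation*}
    \|T^n w - T^n w'\|_\infty \leq \sup_z \EE_z^Q \prod_{t=0}^{n-1}\hat\beta(Z_t) \cdot \|w-w'\|_\infty .
\end{equation*}
With the $n$ from Assumption~\ref{a:fv}(iii), the constant is strictly less than one, so $T^n$ is a contraction. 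Since $T^n$ is a contraction on a complete space, it has a unique fixed point $\bar w$. Because $T\bar w$ is also a fixed point of $T^n$, we get $T \bar w = \bar w$. Writing $j = qn + r$ with $0 \leq r < n$, we have $T^j w = (T^n)^q T^r w \to \bar w$ for every $w$.

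The main obstacle is the iterated bound. It needs care on two points: the monotonicity of $K$ must be used correctly, and the product formula for $K^n$ must be verified, with each $\hat\beta(Z_t)$ evaluated at the right time index.
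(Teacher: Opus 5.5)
Your proof is correct and takes the same route as the paper: the same one-step bound $|Tw - Tw'| \leq \hat\beta\, Q|w-w'|$, combined with condition (iii) of \cref{a:fv} to obtain an eventual contraction. The only difference is that the paper cites Theorem~2.1 of \cite{stachurski2021dynamic} for the final step, whereas you prove that step inline by iterating the positive operator $K$ and bounding $\|K^n\|$ through the product formula, which is exactly what the cited result supplies.
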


\begin{proof}
    Fix $w_1, w_2 \in b\Zsf$. For each $z \in \Zsf$, we have
    \begin{align*}
        |(T w_1)(z) - (T w_2)(z)|
	& \leq \int \beta(z, \xi) \phi(\diff \xi) 
        \int |w_1(z') - w_2(z')| Q(z, \diff z')\\
        & = \hat \beta(z) \int |w_1(z') - w_2(z')| Q(z, \diff z').
    \end{align*}
    With condition (iii) of \cref{a:fv},
    Theorem~2.1 of \cite{stachurski2021dynamic} implies that $T$ is eventually
    contracting and hence globally stable.
\end{proof}

\begin{lemma} \label{l:gcont}
    The decoder $D \colon b\Zsf \to \ell\Xsf$ defined as \eqref{eq:cdg} is a continuous map.
\end{lemma}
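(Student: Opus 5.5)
To prove the lemma, I will show directly that $D$ is Lipschitz from $(b\Zsf, \|\cdot\|_\infty)$ into $(\ell\Xsf, \|\cdot\|)$, where $\|h\| \coloneq \sup_{z} \int |h(z,\xi)| \phi(\diff \xi)$ is the norm in \eqref{eq:lbo}. Lipschitz continuity implies continuity. Lemma~\ref{l:gfmt} already gives $D w \in \ell\Xsf$ for every $w \in b\Zsf$, so only the estimate remains.

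Fix $w_1, w_2 \in b\Zsf$. The first step is the elementary inequality $|\max\{a,b\} - \max\{a,c\}| \leq |b-c|$ for real numbers $a, b, c$. Apply it pointwise with $a = \pi(z,\xi)$ and $b, c$ the two continuation values; the common term $-c(z,\xi)$ cancels. This gives, for every $(z,\xi)$,
\begin{equation*}
    |(Dw_1)(z,\xi) - (Dw_2)(z,\xi)|
    \leq \beta(z,\xi) \left| \int (w_1 - w_2)(z') Q(z, \diff z') \right|
    \leq \beta(z,\xi) \, \|w_1 - w_2\|_\infty .
\end{equation*}
Next, integrate against $\phi(\diff \xi)$ and take the supremum over $z$:
\begin{equation*}
    \|D w_1 - D w_2\| \leq \Bigl( \sup_{z \in \Zsf} \hat\beta(z) \Bigr) \|w_1 - w_2\|_\infty
    \leq \|\beta\|_\infty \, \|w_1 - w_2\|_\infty .
\end{equation*}
The constant is finite by condition (ii) of \cref{a:fv}, since $\beta$ is bounded.

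The main point needing care, though it is mild, is well-definedness: $D w$ is an element of $\ell\Xsf$, which is a space of equivalence classes. The formula \eqref{eq:cdg} produces a genuine Borel function, so the bound above holds for the representatives and therefore for the classes, and the norm does not depend on the representative chosen. Measurability of $(z,\xi) \mapsto \int w(z') Q(z,\diff z')$ follows from the standard fact that the kernel $Q$ maps bounded Borel functions to bounded Borel functions. With that, $D$ is $\|\beta\|_\infty$-Lipschitz and hence continuous, which is the continuity hypothesis (ii) needed to apply Theorem~\ref{t:sosft}.
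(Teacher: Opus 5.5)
Your proof is correct and takes the same route as the paper: a pointwise bound $|(Dw_1)(z,\xi)-(Dw_2)(z,\xi)|\leq \beta(z,\xi)\,\|w_1-w_2\|_\infty$, then integration against $\phi$ and a supremum over $z$, giving the Lipschitz constant $\sup_z \hat\beta(z)$, which is finite because $\beta$ is bounded under \cref{a:fv}. The only difference is cosmetic: the paper phrases the argument with a sequence $w_n \to w$ rather than as a Lipschitz estimate.
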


\begin{proof}
    Fix $w_n \to w$ in $b\Zsf$, so that $\| w_n - w \|_\infty \to 0$.
    For each $(z, \xi) \in \Xsf$, we have
    \begin{equation*}
        |(D w_n)(z, \xi) - (D w)(z, \xi)|
        \leq \beta(z, \xi) \int |w_n(z') - w(z')| Q(z, \diff z')
        \leq \beta(z, \xi) \|w_n - w\|_\infty.
    \end{equation*}
    Integrating over $\xi$ and taking the supremum over $z$ gives
    \begin{equation*}
        \|D w_n - D w\|_{\ell\Xsf}
        = \sup_{z \in \Zsf} \int |D w_n - D w|(z,\xi) \, \phi(\diff \xi)
        \leq \hat\beta_{\sup} \, \|w_n - w\|_\infty,
    \end{equation*}
    where $\hat\beta_{\sup} \coloneq \sup_{z \in \Zsf} \int \beta(z,\xi) \, \phi(\diff\xi)$
    is finite by \cref{a:fv}. Hence $D$ is continuous from $(b\Zsf, \|\cdot\|_\infty)$
    to $(\ell\Xsf, \|\cdot\|_{\ell\Xsf})$.
\end{proof}

\begin{proof}[Proof of Proposition~\ref{p:fk}]
    Lemma~\ref{l:stf} tells us that $(\ell \Xsf, S)$ and $(b\Zsf, T)$ are 
    strongly semiconjugate under $E,D$.
    We showed in Lemma~\ref{l:gs} that $(b\Zsf, T)$ is globally stable 
    with unique fixed point in $b\Zsf$. 
    We showed in Lemma~\ref{l:gcont} that $D$ is continuous.
    All the claims in Proposition~\ref{p:fk} now follow from \cref{t:sosft}.
\end{proof}

\subsection{An Analytical Illustration} \label{ss:analytical}

The result above, Proposition~\ref{p:fk}, holds under the mild
eventual-contraction condition of Assumption~\ref{a:fv}. If we strengthen this
assumption to requiring a constant discount factor, we can also see how shifting
from $S$ to $T$ moves us to a space of bounded functions and, in consequence,
provides a transparent contraction result. This helps to illustrate how the
encoding-decoding methodology can yield analytical convenience as well as
computational efficiency.  To provide this demonstration, we assume that the
discount factor is constant: $\beta(z, \xi) \equiv \beta$ for some constant
$\beta < 1$.

To begin this discussion, notice that analyzing the fixed point problem via the original
operator $S$ is complicated by the fact that $S$ acts on the space $\ell\Xsf$ of potentially
unbounded functions. In contrast, $T$ acts on 
the space of bounded functions $b\Zsf$, and, because of this, naturally becomes
a contraction mapping under the supremum norm
$\|\cdot\|_\infty$, with modulus $\beta$. Indeed, for any $w_1, w_2 \in b\Zsf$
and $z \in \Zsf$, an application of the elementary bound $|\max\{p, a\} -
\max\{p, b\}| \leq |a-b|$ yields
\begin{align*}
    |(T w_1)(z) - (T w_2)(z)|
    &\leq
     \left|\int \int \beta(z, \xi) \bigl(w_1(z') - w_2(z')\bigr) Q(z, \diff z')
    \phi(\diff \xi) \right| \\
    &\leq \int \beta(z, \xi) \phi(\diff \xi)
    \int |w_1(z') - w_2(z')| Q(z, \diff z') \\
    &\leq \beta \|w_1 - w_2\|_\infty.
\end{align*}
Taking the supremum on the left-hand side yields
$\|T w_1 - T w_2\|_\infty \leq \beta \|w_1 - w_2\|_\infty$.  By the
Banach fixed point theorem, $T$ has a unique fixed point $\bar w \in b\Zsf$, and
the iterates $T^n w$ converge to $\bar w$ for any $w \in b\Zsf$.  Hence, in this special case the Banach fixed point theorem
alone establishes existence, uniqueness, and global convergence of the fixed
point on the ``nicer'' space $b\Zsf$, and Theorem~\ref{t:sosft} transfers these
properties back to the original problem on $\ell\Xsf$, recovering
Proposition~\ref{p:fk} by a more elementary route under the stronger assumption.

\section{Further Applications} \label{s:fa}

In this section we provide three further applications. The first is a turn-based
zero-sum stochastic game, the second is a standard
asset pricing problem in an economy with $K$ assets,
 and the third concerns asset
pricing with heterogeneous beliefs.  In that last application, we show how the types of transformations
discussed above can also be used to eliminate bias for a solution method based on
stochastic approximation.

\subsection{A Game-theoretic Application}\label{ss:game}

Consider a finite, discounted, alternating turn-based zero-sum stochastic game
(see, e.g., \cite{shapley1953stochastic}, \cite{mertens2015repeated}, or
\cite{solan2015stochastic}).
The state space $\Xsf_1$ is all states in which player 1 chooses an action,
and the state space $\Xsf_2$ is all states in which player 2 chooses an action.
Suppose that $\Xsf_1$ and $\Xsf_2$ are disjoint. Play alternates between the two: every
transition out of $\Xsf_1$ lands in $\Xsf_2$ and every transition out of $\Xsf_2$ lands in
$\Xsf_1$, so the game is bipartite in structure.

Let $\beta \in (0, 1)$ be the discount factor. 
If the current state is $s \in \Xsf_1$, player 1 chooses an action $a \in A$. 
Player 1 receives current payoff $r_1(s, a)$ and player 2 receives the negative of that payoff. 
The next state is drawn according to a transition probability $P_1(s, a, \cdot)$ over $\Xsf_2$.
If the current state is $s \in \Xsf_2$, player 2 chooses an action $b \in B$. 
Player 1 receives current payoff $r_2(s, b)$, and player 2 receives the negative of that payoff.
The next state is drawn according to a transition probability $P_2(s, b, \cdot)$ over $\Xsf_1$.

Let $v_1$ be the value function on $\Xsf_1$, and let $v_2$ be the value function on $\Xsf_2$. 
The value $v_1(s)$ is the equilibrium continuation payoff to player 1
when the current state is $s \in \Xsf_1$; since play then moves to $\Xsf_2$, the
continuation is evaluated with $v_2$.
Symmetrically, $v_2(s)$ is the equilibrium continuation payoff to player 1
when the current state is $s \in \Xsf_2$, and its continuation is evaluated with $v_1$.
For simplicity, we assume that $\Xsf_1$, $\Xsf_2$, $A$, and $B$ are finite.

The Bellman equation is
\begin{equation*}
    v_1(s) = \max_{a \in A}   
    \left\{r_1(s, a) + \beta \sum_{s' \in \Xsf_2} v_2(s') P_1(s, a, s')\right\}, 
    \quad (s \in \Xsf_1),
\end{equation*}
\begin{equation*}
    v_2(s) = \min_{b \in B} 
    \left\{r_2(s, b) + \beta \sum_{s' \in \Xsf_1} v_1(s') P_2(s, b, s')\right\},
    \quad (s \in \Xsf_2).
\end{equation*}
The minimization appears because values are written from player 1's point of view,
while player 2 chooses $b$.

To solve the equilibrium values $v_1$ and $v_2$, we 
define the operators $E \colon \RR^{\Xsf_2} \to \RR^{\Xsf_1}$ 
and $D \colon \RR^{\Xsf_1} \to \RR^{\Xsf_2}$ by
\begin{equation}
    (E v_2)(s) = \max_{a \in A}   
    \left\{r_1(s, a) + \beta \sum_{s' \in \Xsf_2} v_2(s') P_1(s, a, s')\right\}, 
    \quad (s \in \Xsf_1).
\end{equation}
\begin{equation}
    (D v_1)(s) = \min_{b \in B} \left\{r_2(s, b) + \beta \sum_{s' \in \Xsf_1}
    v_1(s') P_2(s, b, s')\right\},
    \quad (s \in \Xsf_2).
\end{equation}

Define the operators $T = E \circ D$ and $S = D \circ E$, and let $V_2 = \RR^{\Xsf_2}$
and $V_1 = \RR^{\Xsf_1}$, each equipped with the supremum norm $\|\cdot\|_\infty$. Since
$S = D \circ E$ on $V_2$ and $T = E \circ D$ on $V_1$, the systems $(V_2, S)$ and
$(V_1, T)$ are strongly semiconjugate under $E, D$. Both $E$ and $D$ are
$\beta$-Lipschitz under the supremum norm. Indeed,
for any $w_1, w_2 \in V_2$ and $s \in \Xsf_1$, 
$|\max_a f(a) - \max_a g(a)| \leq \max_a |f(a) - g(a)|$ yields
\begin{align*}
    |(E w_1)(s) - (E w_2)(s)|
    &\leq \beta \max_{a \in A} \left| \sum_{s' \in \Xsf_2}
    \bigl(w_1(s') - w_2(s')\bigr) P_1(s, a, s') \right| \\
    &\leq \beta \|w_1 - w_2\|_\infty,
\end{align*}
since each $P_1(s, a, \cdot)$ is a probability distribution. Taking the supremum
over $s \in \Xsf_1$ gives $\|E w_1 - E w_2\|_\infty \leq \beta \|w_1 - w_2\|_\infty$,
and the same argument, applied to the minimum, shows that $D$ is $\beta$-Lipschitz.
Hence $T = E \circ D$ is a $\beta^2$-contraction on the complete space $V_1$, so, by
the Banach fixed point theorem, $(V_1, T)$ is globally stable with a unique fixed
point $\bar v_1$, and $T^n w \to \bar v_1$ for every $w \in V_1$. Since $D$ is also
continuous, Theorem~\ref{t:sosft}---with $(V_1, T)$ and $(V_2, S)$ in the roles of
$(W, T)$ and $(V, S)$---transfers global stability to the higher-dimensional
system: $(V_2, S)$ has a unique fixed point $\bar v_2 \coloneq D \bar v_1$, and
\begin{equation*}
    D T^n w \to \bar v_2
    \quad \text{as} \quad n \to \infty,
    \qquad \text{for every } w \in V_1.
\end{equation*}
The pair $(\bar v_1, \bar v_2)$ solves the coupled Bellman equations above. We may
therefore compute the equilibrium by iterating $T$ on the lower-dimensional space
$V_1$ and decoding once via $D$, without ever iterating the higher-dimensional
operator $S$.

When $|\Xsf_1|$ is smaller than $|\Xsf_2|$, iterating with $T$ rather than $S$ reduces the dimension of
the value vector, and also reduces per-iteration time under a
sparse-reachability condition. The memory gain from the dimensionality reduction is immediate, since the iterate of
$T$ lives in $\RR^{\Xsf_1}$ rather than $\RR^{\Xsf_2}$. The time gain holds when the
transition kernel $P_1$ reaches only a subset of $\Xsf_2$ from $\Xsf_1$, so that an
iteration of $T$ can be carried out without a full evaluation of $D$ on $\Xsf_2$.

To quantify both effects, let $n_1 \coloneq |\Xsf_1|$, $n_2 \coloneq |\Xsf_2|$,
$a \coloneq |A|$, and $b \coloneq |B|$, and let
\begin{equation*}
    d_1 \coloneq \max_{s \in \Xsf_1,\, u \in A}
    \left| \operatorname{supp} P_1(s, u, \cdot) \right|,
    \qquad
    d_2 \coloneq \max_{s \in \Xsf_2,\, u \in B}
    \left| \operatorname{supp} P_2(s, u, \cdot) \right|
\end{equation*}
be the maximal transition-support sizes of $P_1$ and $P_2$, so that $d_1 \leq n_2$
and $d_2 \leq n_1$. Let
\begin{equation*}
    R_2 \coloneq \bigcup_{s \in \Xsf_1,\, u \in A}
    \operatorname{supp} P_1(s, u, \cdot) \subseteq \Xsf_2,
    \qquad \rho_2 \coloneq |R_2|,
\end{equation*}
be the set of player-2 states reached with positive probability from some
player-1 state-action pair.

We measure cost by the number of arithmetic operations, using standard order
notation: $f = O(g)$ means that $f$ is bounded above by a constant multiple of
$g$, while $f = \Theta(g)$ means that $f$ and $g$ are of the same order, i.e.,
each is bounded above by a constant multiple of the other. In this example,
one application of $E \colon \RR^{\Xsf_2} \to \RR^{\Xsf_1}$ costs $O(n_1 a d_1)$ and one
full application of $D \colon \RR^{\Xsf_1} \to \RR^{\Xsf_2}$ costs $O(n_2 b d_2)$, so an
iteration of $S = D \circ E$ costs $O(n_1 a d_1 + n_2 b d_2)$. An iteration of
$T = E \circ D$ avoids a full decode: since $(E(Dv_1))(s)$ involves $(Dv_1)(s')$ only for states
$s'$ reached from $\Xsf_1$, the decode $Dv_1$ is needed only on $R_2$, so the iteration
costs $O(n_1 a d_1 + \rho_2 b d_2)$, followed by a single decode on
$\Xsf_2$ at cost $O(n_2 b d_2)$ when the equilibrium value $v_2$ is required. Because $R_2 \subseteq \Xsf_2$, we have
$\rho_2 \leq n_2$, so a $T$-iteration never costs more than an $S$-iteration; the saving
grows as $\rho_2$ falls below $n_2$---that is, as fewer player-2 states are reachable
from the player-1 states. Finally, since $S$ and $T$ are $\beta^2$-contractions, reaching tolerance
$\varepsilon$ takes $K_\varepsilon = O(\log(1/\varepsilon) / |\log \beta^2|)$
iterations. Hence, computing the equilibrium values to tolerance $\varepsilon$
costs $O(K_\varepsilon (n_1 a d_1 + n_2 b d_2))$ by direct iteration on $S$, and
$O(K_\varepsilon (n_1 a d_1 + \rho_2 b d_2) + n_2 b d_2)$ by reduced iteration on
$T$ followed by a final decode.

In the sparse regime where $a, b, d_1, d_2 = O(1)$, $\rho_2 = O(n_1)$, and
$n_1 \ll n_2$, direct iteration costs $O(K_\varepsilon n_2)$, while reduced
iteration costs $O(K_\varepsilon n_1 + n_2)$ when the full $\Xsf_2$-value is decoded at
the end. Thus, if we only compute the low-dimensional value $\bar v_1$ on $\Xsf_1$, the
speedup is of order $\Theta(n_2 / n_1)$, and when the full value $\bar v_2 = D \bar v_1$ is
also computed, the speedup is of order $\Theta(\min\{ K_\varepsilon,\, n_2 / n_1 \})$.%
\footnote{Since $K_\varepsilon n_1 + n_2 = \Theta(\max\{K_\varepsilon n_1,\, n_2\})$
and $X / \max\{A, B\} = \min\{X/A,\, X/B\}$, the speedup is
$K_\varepsilon n_2 / (K_\varepsilon n_1 + n_2) = \Theta(\min\{n_2/n_1,\, K_\varepsilon\})$.
When the iteration term $K_\varepsilon n_1$ dominates the final decode $n_2$, the
speedup is the full dimensional ratio $n_2/n_1$. When the decode dominates, it is
capped at $K_\varepsilon$.}
Moreover, even when
$R_2 = \Xsf_2$, where the time complexity is of the same order, the value-vector
memory reduction still holds, since the iterate of $T$ has dimension $n_1$ rather
than $n_2$.

\subsection{An Asset Pricing Problem}\label{ss:ap}

Next we treat an asset pricing problem in an economy with $K$ assets. Assuming
absence of arbitrage and homogeneous beliefs, the current price of the $i$-th
asset, which prices a claim to the dividend process $(Y^i_t)_{t \geq 0}$, obeys
\begin{equation}\label{eq:pa}
	P_t^i = Y_t^i + \EE_t M_{t+1} P_{t+1}^i, 
	\qquad t = 0, 1, \ldots,
\end{equation}
where $(M_t)_{t \geq 1}$ is a stochastic discount factor (SDF) process (see, e.g.,
\cite{kreps1981arbitrage} or \cite{hansen2009pricing}).\footnote{We are pricing
a \emph{cum-dividend} contract, so that the owner of the asset at the start of
time $t$ receives the current dividend.} We suppose that dividends and the SDF
process are both functions of a Markov state process $(Z_t)_{t \geq
0}$ that takes values in measurable space $(\Zsf, \zZ)$. In particular, $M_{t+1}
      = m(Z_t, Z_{t+1})$ for some measurable function $m \colon \Zsf \times \Zsf
      \to (0, \infty)$, while $Y_t^i = d_i(Z_t, \xi_{t,i})$.  The map $d_i$ is also
      assumed to be measurable and takes values in $\RR_+$, while $\xi_{t,i}$ is
      the $i$-th component of the {\sc iid} vector process $(\xi_t)$, which is
      independent of $(Z_t)_{t \geq 0}$ and takes values in Borel subset $\Esf$
      of $\RR^K$.  Thus, at each time $t$, the
      dividend of the $i$-th asset depends on the persistent macro-level common
      factors embedded in $(Z_t)$ and the idiosyncratic and transitory shock
      components in $(\xi_t)$.

Due to the Markov property of the primitives, we seek
a stationary Markov solution for prices, which we write as $P^i_t = v_i(Z_t, \xi_{t,i})$.
In view of \eqref{eq:pa}, the function $v_i$ will satisfy the functional
equation 
\begin{equation}\label{eq:gap3}
    v_i(z, \xi_i)
    = d_i(z, \xi_i) + \int \int m(z, z')  v_i(z', \xi_i') Q(z, \diff z') \phi(\diff \xi'),
\end{equation}
where $Q$ is a stochastic kernel generating the state process $(Z_t)$ and $\phi$
is the distribution of $\xi$. We can stack the $K$ functional equations in
\eqref{eq:gap3} into a functional equation for the cross-section of assets:
\begin{equation}\label{eq:apv}
	v(z, \xi) 
	= d(z, \xi) + \int \int m(z, z') v(z', \xi') Q(z, \diff z') \phi(\diff \xi').
\end{equation}
Here $v = (v_1, \ldots, v_K)$ and so on.

In order to avoid placing boundedness restrictions on the dividend process,
we embed this problem in a space of integrable functions.  In particular, 
we seek a fixed point of the mapping
\begin{equation*}
	(Sv)(z, \xi) = d(z, \xi) + \int \int 
	m(z, z') v(z', \xi') Q(z, \diff z') \phi(\diff \xi')
\end{equation*}
in the space $L_1(\psi \times \phi) \coloneq L_1(\Zsf \times \Esf, \RR^K, \psi \times \phi)$, the
Banach space of (equivalence classes of) measurable functions
$v \colon \Zsf \times \Esf \to \RR^K$ with
\begin{equation}\label{eq:pp}
	\| v \|_1
	\coloneq \int \int
        \sum_{i=1}^K | v_i(z, \xi) | \psi(\diff z) \phi(\diff \xi)
	< \infty.
\end{equation}
Each component $v_i$ depends only on the $i$-th coordinate $\xi_i$ of $\xi$.
Here and below, $\psi$ is the unique stationary distribution of $Q$
and $\| \cdot \|_1$ is a norm over the function space $L_1(\psi \times \phi)$.
A solution to the asset pricing problem is a fixed point for $S$ on $L_1(\psi \times \phi)$.

Similarly, we define $L_1(\psi) \coloneq L_1(\Zsf, \RR^K, \psi)$, the space of
(equivalence classes of) measurable functions $w \colon \Zsf \to \RR^K$ with
$\| w \|_{L_1(\psi)} \coloneq \int \sum_{i=1}^K |w_i(z)| \, \psi(\diff z) < \infty$.
The operator $H$ below acts componentwise, so its scalar properties extend to
$L_1(\psi)$ coordinatewise.

A lower-dimensional representation of this fixed point problem can be obtained
by working instead with the equation
\begin{equation}\label{eq:apw}
	w(z)
	= \int \int m(z, z') d(z', \xi') Q(z, \diff z') \phi(\diff \xi')
	+ \int m(z, z') w(z') Q(z, \diff z').
\end{equation}
In comparing the functional equations \eqref{eq:apw} and \eqref{eq:apv}, we see
that the function $w$ lives on the lower-dimensional space $\Zsf$, as compared
to the space $\Zsf \times \Esf$.   The extent of dimensionality reduction
depends on the number of assets $K$ and the dimensionality of each
$\xi_i$. Let the operator $T$ be defined by 
\begin{equation*}
    (Tw)(z)
	= \int \int m(z, z') d(z', \xi') Q(z, \diff z') \phi(\diff \xi')
	+ \int m(z, z') w(z') Q(z, \diff z').
\end{equation*}
We next define $E \colon L_1(\psi \times \phi) \rightarrow L_1(\psi)$ and
$D \colon L_1(\psi)\rightarrow L_1(\psi \times \phi)$ via
\begin{align*}
    & (Ev)(z) =\int \int 
        m(z, z') v(z', \xi') Q(z, \diff z') \phi(\diff \xi')
        \qquad \text{and}
        \\
    & (Dw)(z,\xi) =d(z,\xi)+w(z).
\end{align*}

\begin{assumption}\label{a:app}
    The following conditions hold:
    \begin{enumerate}
	\item The dividend function $d$ is in $L_1(\psi \times \phi)$.
	\item The stochastic kernel $Q$ is irreducible and has a unique
        stationary distribution $\psi$.
	\item The componentwise discount operator $H$ given by
        \begin{equation*}
            (H w)(z) := \int m(z, z') w(z') Q(z, \diff z') 
            \qquad (w \in L_1(\psi))
        \end{equation*}
        is eventually compact as a self-map on $L_1(\psi)$.
	\item The SDF process satisfies, with $Z_0 \sim \psi$,
	\begin{equation*}
            \lim_{n\rightarrow\infty}
	    \frac{1}{n}
	    \ln \left\{\EE_\psi\prod_{t=1}^{n}M_t \right\}<0.
	\end{equation*}
    \end{enumerate}
\end{assumption}

Under \cref{a:app} we can state the following result.

\begin{proposition}\label{p:lap}
    If \cref{a:app} holds, then the asset pricing problem \eqref{eq:apv} has a
    unique solution $\bar v$ in $L_1(\psi \times \phi)$. Moreover, for $w \in L_1(\psi)$, 
    we have $D T^n w \to \bar v$ as $n \to \infty$.
\end{proposition}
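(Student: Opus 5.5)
The plan mirrors the proof of Proposition~\ref{p:fk}: I will establish that $(L_1(\psi\times\phi), S)$ and $(L_1(\psi), T)$ are strongly semiconjugate under $E,D$, show that $(L_1(\psi),T)$ is globally stable and that $D$ is continuous, and then invoke Theorem~\ref{t:sosft}.

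\textbf{Well-definedness and semiconjugacy.} First, $E$ and $D$ map between the stated spaces. For $D$, $Dw = d + w\circ \mathrm{pr}_{\Zsf}$, and $\|Dw\|_1 \leq \|d\|_1 + \|w\|_{L_1(\psi)}$ because $\phi$ is a probability measure. For $E$, write $Ev = H\bar v$ with $\bar v(z') \coloneq \int v(z',\xi')\phi(\diff\xi')$. Since $\|\bar v\|_{L_1(\psi)}\leq \|v\|_1$, it is enough to know that $H$ maps $L_1(\psi)$ to itself; condition (iii) of \cref{a:app} presupposes this (boundedness of $H$), and I will take it as given. The identities are direct substitutions. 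First, $(DEv)(z,\xi) = d(z,\xi) + \int\int m(z,z')v(z',\xi')Q(z,\diff z')\phi(\diff\xi') = (Sv)(z,\xi)$. Second, $(EDw)(z) = \int\int m(z,z')[d(z',\xi')+w(z')]Q(z,\diff z')\phi(\diff\xi') = (Tw)(z)$, using linearity and $\phi(\Esf)=1$. These manipulations are justified by Fubini/Tonelli, since everything is nonnegative or integrable.

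\textbf{Global stability of $T$.} Write $T w = h + Hw$ with $h \coloneq E d \in L_1(\psi)$. Then $T^n w = \sum_{j<n} H^j h + H^n w$. Global stability therefore follows if the spectral radius $r(H)<1$ on $L_1(\psi)$. In that case the Neumann series gives the unique fixed point $\bar w = (I-H)^{-1}h$, and $\|T^n w - \bar w\| = \|H^n(w-\bar w)\| \to 0$ by Gelfand's formula. Equivalently, $T^k$ is a contraction in norm for large $k$.

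\textbf{Main obstacle: showing $r(H)<1$.} This is where conditions (ii)--(iv) are used. $H$ is a positive operator on the Banach lattice $L_1(\psi)$, it is irreducible because $Q$ is irreducible and $m>0$, and it is eventually compact. A Krein--Rutman / de Pagter type result then says $r(H)$ is an eigenvalue with a strictly positive eigenfunction $e$ and a strictly positive eigenfunctional $e^*$. Moreover, a standard result (as in the stability literature for valuation operators, e.g.\ Stachurski and coauthors' work on spectral radius conditions) identifies
\[
r(H)=\lim_{n\to\infty}\bigl(\langle \mathbbm 1, H^n\mathbbm 1\rangle_\psi\bigr)^{1/n},
\qquad
\langle \mathbbm 1, H^n\mathbbm 1\rangle_\psi=\EE_\psi\prod_{t=1}^n M_t .
\]
The equality on the right holds by iterating the kernel with $Z_0\sim\psi$. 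To get the identification on the left, I would sandwich $\mathbbm 1$ using $e$ and $e^*$ (or their truncations) to show that $\langle\mathbbm 1,H^n\mathbbm 1\rangle^{1/n}$ has the same growth rate as $\|H^n\|^{1/n}$. Condition (iv) then gives $\ln r(H)<0$. Since $H$ acts componentwise, the vector case reduces to the scalar case coordinatewise.

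\textbf{Conclusion.} $D$ is continuous, since it is affine with $\|Dw_1-Dw_2\|_1 = \|w_1-w_2\|_{L_1(\psi)}$. With strong semiconjugacy, global stability of $(L_1(\psi),T)$ and continuity of $D$ in hand, Theorem~\ref{t:sosft} yields a unique fixed point $\bar v = D\bar w$ of $S$ in $L_1(\psi\times\phi)$, together with $DT^n w\to\bar v$ for every $w\in L_1(\psi)$.
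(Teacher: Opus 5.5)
Your proposal is correct and follows the paper's proof essentially step for step: strong semiconjugacy with the same bound on $E$ via boundedness of $H$ (\cref{l:as_def}), then global stability of $T$ from $\rho(H)<1$ using $Tw = Ed + Hw$, then continuity of the affine decoder, and finally \cref{t:sosft}. The only difference is at $\rho(H)<1$: you sketch a Krein--Rutman argument identifying $\rho(H)$ with the growth rate of $\EE_\psi \prod_{t=1}^n M_t$, whereas the paper simply cites Theorem~C.7 of \cite{borovivcka2021stability} for that step.
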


To prove Proposition~\ref{p:lap}, we begin with the following lemma:

\begin{lemma}\label{l:as_def}
    If \cref{a:app} holds, then the systems $(L_1(\psi \times \phi), S)$ 
    and $(L_1(\psi), T)$ are strongly semiconjugate under $E$ and $D$.
\end{lemma}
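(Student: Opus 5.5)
The plan has two parts. First, check that $E$ and $D$ map between the stated spaces. Second, verify the two algebraic identities $S = D \circ E$ on $L_1(\psi \times \phi)$ and $T = E \circ D$ on $L_1(\psi)$, exactly as in the proof of Lemma~\ref{l:stf}.

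\textbf{Algebraic identities.} Fix $v \in L_1(\psi\times\phi)$. By definition,
\begin{equation*}
    (DEv)(z,\xi) = d(z,\xi) + \int\int m(z,z') v(z',\xi') Q(z,\diff z')\phi(\diff \xi'),
\end{equation*}
which is $(Sv)(z,\xi)$. For $w \in L_1(\psi)$, substituting $Dw = d + w$ into $E$ and splitting the integral gives
\begin{equation*}
    (EDw)(z) = \int\int m(z,z') d(z',\xi') Q(z,\diff z')\phi(\diff\xi') + \int\int m(z,z') w(z') Q(z,\diff z')\phi(\diff\xi').
\end{equation*}
Since $w$ does not depend on $\xi'$ and $\phi$ is a probability measure, the second term equals $(Hw)(z)$. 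Hence $EDw = Tw$.

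\textbf{Well-definedness of $D$.} If $w \in L_1(\psi)$, then $(z,\xi)\mapsto w(z)$ has $L_1(\psi\times\phi)$-norm $\|w\|_{L_1(\psi)}$, because $\phi$ is a probability measure. Combined with $d \in L_1(\psi\times\phi)$ (condition (i)), this gives $Dw \in L_1(\psi\times\phi)$. Each component of $Dw$ depends only on $(z, \xi_i)$, since $d_i$ does.

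\textbf{Well-definedness of $E$.} Write $Ev = H\bar v$, where $\bar v(z') \coloneq \int v(z',\xi')\phi(\diff\xi')$. Fubini and $|\bar v| \leq \int |v|\,\diff\phi$ give $\bar v \in L_1(\psi)$ with $\|\bar v\|_{L_1(\psi)} \leq \|v\|_1$. It then suffices to know that $H$ maps $L_1(\psi)$ into itself. Condition (iii) of \cref{a:app} treats $H$ as a self-map on $L_1(\psi)$, and I will read that as part of the assumption. In particular $H$ is defined on equivalence classes: if $w = w'$ $\psi$-a.e., then by stationarity of $\psi$, $w(z') = w'(z')$ for $Q(z,\cdot)$-almost every $z'$ for $\psi$-a.e.\ $z$, so $Hw = Hw'$ in $L_1(\psi)$. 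With this, $E$ maps into $L_1(\psi)$, and $T = \bar H d + H$ is a self-map of $L_1(\psi)$, where $\bar H d \coloneq E d$ is a fixed element of $L_1(\psi)$.

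\textbf{Main obstacle.} I expect the only delicate step to be the measure-theoretic bookkeeping. This means justifying Fubini for the iterated integrals, which is valid because $m > 0$ and the integrals of absolute values are controlled. It also means confirming that all maps respect the $\psi$- and $(\psi\times\phi)$-a.e.\ equivalence relations, which follows from stationarity of $\psi$ as above. If the self-map property of $H$ is not granted by (iii), I would derive $\int |Hw|\,\diff\psi < \infty$ from eventual compactness (which presupposes boundedness of $H$ on $L_1(\psi)$). Once these points are settled, the identities above finish the proof.
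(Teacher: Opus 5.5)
Your proposal is correct and follows essentially the same route as the paper. You check that $D$ and $E$ land in the right spaces, with $E$ controlled through the $\phi$-average of $v$ and the operator $H$ from Assumption~(iii), and then verify $S = D\circ E$ and $T = E\circ D$ by direct substitution. The only difference is cosmetic: the paper bounds $|Ev| \leq H\tilde v$ with $\tilde v = \int |v|\,\diff\phi$ and uses $\|H\|$ to get $\|Ev\|_{L_1(\psi)} \leq \|H\|\,\|v\|_1$, whereas you write $Ev = H\bar v$ exactly. Your Fubini justification for that identity needs $H\tilde v < \infty$ a.e., which is the same fact the paper uses, and you additionally check compatibility with a.e.\ equivalence classes via stationarity of $\psi$.
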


\begin{proof}
    We first verify that the maps $E$ and $D$ are well-defined on the
    stated spaces. For $D$, fix $w \in L_1(\psi)$. Since
    $(Dw)(z,\xi)=d(z,\xi)+w(z)$,
    \begin{align*}
        \|Dw\|_1
        &= \int \int \sum_{i=1}^K |d_i(z,\xi)+w_i(z)|
            \,\psi(\diff z)\,\phi(\diff\xi) \\
        &\leq \|d\|_1+\|w\|_{L_1(\psi)}<\infty.
    \end{align*}
    Hence $D \colon L_1(\psi)\to L_1(\psi\times\phi)$.

    For $E$, fix $v\in L_1(\psi\times\phi)$ and define
    \begin{equation*}
        \tilde v_i(z')\coloneq
        \int |v_i(z',\xi')|\,\phi(\diff\xi'),
        \qquad i=1,\ldots,K.
    \end{equation*}
    Tonelli's theorem gives
    $\|\tilde v\|_{L_1(\psi)}=\|v\|_1$. Applying the triangle inequality
    componentwise and using Tonelli again,
    \begin{align*}
        \|Ev\|_{L_1(\psi)}
        &= \int \sum_{i=1}^K |(Ev)_i(z)|\,\psi(\diff z) \\
        &\leq \int\int m(z,z')\sum_{i=1}^K\tilde v_i(z')
            \,Q(z,\diff z')\,\psi(\diff z) \\
        &= \int\sum_{i=1}^K(H\tilde v)_i(z)\,\psi(\diff z).
    \end{align*}
    By \cref{a:app}(iii), $H$ is a bounded linear operator and some power of
    $H$ is compact. Since $H$ is positive,
    \begin{align*}
        \int\sum_{i=1}^K(H\tilde v)_i(z)\,\psi(\diff z)
        &=\|H\tilde v\|_{L_1(\psi)} \\
        &\leq\|H\|\,\|\tilde v\|_{L_1(\psi)}
        =\|H\|\,\|v\|_1<\infty.
    \end{align*}
    Hence $E \colon L_1(\psi\times\phi)\to L_1(\psi)$.

    Moreover, we have $S = D \circ E$ on $L_1(\psi \times \phi)$ and $T = E \circ D$ on
    $L_1(\psi)$. Indeed, fixing $v \in L_1(\psi \times \phi)$, we have
    \begin{align*}
        (D E v)(z, \xi)
        & = d(z, \xi) + (E v)(z) \\
        & = d(z, \xi) + \int \int
	m(z, z') v(z', \xi') Q(z, \diff z') \phi(\diff \xi') \\
        & = (S v)(z, \xi).
    \end{align*}
    Fixing $w \in L_1(\psi)$, we have
    \begin{align*}
        (E D w)(z)
        & = \int \int 
            m(z, z') (D w)(z', \xi') Q(z, \diff z') \phi(\diff \xi') \\
        & = \int \int m(z, z') d(z', \xi') Q(z, \diff z') \phi(\diff \xi')
            + \int m(z, z') w(z') Q(z, \diff z') 
            \\
        & = (T w)(z).
        \qedhere
    \end{align*}
\end{proof}

\begin{lemma}\label{l:as1}
	If \cref{a:app} holds, then $T$ is globally stable.
\end{lemma}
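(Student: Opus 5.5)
Since $T$ is affine, I would write it as $Tw = h + Hw$ with $h \coloneq E d$, that is,
\[
    h(z) \coloneq \int \int m(z, z') d(z', \xi') Q(z, \diff z') \phi(\diff \xi').
\]
By the argument in the proof of Lemma~\ref{l:as_def}, $h = Ed \in L_1(\psi)$ and $H$ is a bounded positive linear operator on $L_1(\psi)$, so $T$ is a self-map on $L_1(\psi)$. Global stability then reduces to showing that the spectral radius satisfies $r(H) < 1$. Once this holds, Gelfand's formula gives some $n$ with $\|H^n\| < 1$, so $T^n w_1 - T^n w_2 = H^n(w_1 - w_2)$ makes $T$ eventually contracting on the Banach space $L_1(\psi)$. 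Hence $T$ has a unique fixed point $\bar w = \sum_{k \geq 0} H^k h$ (a Neumann series), and $T^j w \to \bar w$ for every $w \in L_1(\psi)$. Since $H$ acts componentwise, it suffices to argue for scalar functions.

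The key step, and the main obstacle, is linking $r(H)$ to condition~(iv) of \cref{a:app}. The plan is as follows.

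First, note that iterating $H$ gives
\[
    (H^n \mathbbm 1)(z) = \EE_z \prod_{t=1}^n M_t ,
\]
so that
\[
    \int H^n \mathbbm 1 \, \diff \psi = \EE_\psi \prod_{t=1}^n M_t .
\]
Second, use eventual compactness together with positivity. By Krein--Rutman (in the version for eventually compact positive operators, with $r(H)>0$; if $r(H)=0$ there is nothing to prove), the adjoint $H^*$ on $L_\infty(\psi)$ has an eigenvector $e^* \geq 0$, $e^* \neq 0$, with $H^* e^* = r(H) e^*$. Irreducibility of $Q$ together with $m > 0$ should make $H$ irreducible as a positive operator, so that $e^*$ is strictly positive $\psi$-a.e. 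Then
\[
    r(H)^n \langle \mathbbm 1, e^* \rangle = \langle H^n \mathbbm 1, e^* \rangle \leq \|e^*\|_\infty \, \EE_\psi \prod_{t=1}^n M_t .
\]
Taking $n$-th roots and logarithms, and using $\langle \mathbbm 1, e^*\rangle > 0$, gives
\[
    \ln r(H) \leq \lim_{n \to \infty} \frac{1}{n} \ln \EE_\psi \prod_{t=1}^n M_t < 0 .
\]
Hence $r(H) < 1$.

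If the irreducibility or Krein--Rutman details turn out to be delicate, I would fall back on an alternative route. This uses $\|H^n\| = \sup_{\|f\|_1 \leq 1,\, f \geq 0} \int H^n f \, \diff\psi = \|(H^*)^n \mathbbm 1\|_\infty$ together with a comparison of $\psi$-integrals through stationarity, and I would cite the result in the literature (e.g., \cite{stachurski2021dynamic}-style spectral radius lemmas) identifying $r(H)$ with the limit in~(iv) under eventual compactness and irreducibility.
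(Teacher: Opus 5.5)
Your proof is correct, and its outer structure matches the paper's. The paper also writes $Tw = Ed + Hw$, shows $r(H) < 1$, and concludes that $T$ is eventually contracting; your Gelfand-formula step just spells out that last inference.

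The difference is in the key step. The paper obtains $r(H) < 1$ by citing Theorem~C.7 of \cite{borovivcka2021stability}, which links $r(H)$ to the limit in condition~(iv) of \cref{a:app}. You instead prove the inequality you need, $r(H) \leq \liminf_n (\EE_\psi \prod_{t=1}^n M_t)^{1/n}$, directly from a nonnegative eigenvector $e^*$ of the adjoint $H^*$ on $L_\infty(\psi)$.

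Passing to the adjoint is the right move. Boundedness of $e^*$ is exactly what lets you dominate $\langle H^n \mathbbm 1, e^* \rangle$ by $\|e^*\|_\infty \, \EE_\psi \prod_{t=1}^n M_t$. You should state why $e^*$ exists: $(H^*)^k = (H^k)^*$ is compact by Schauder's theorem, and $r(H^*) = r(H) > 0$.

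You can delete the irreducibility step you were unsure about. Since $\langle \mathbbm 1, e^* \rangle = \int e^* \diff \psi > 0$ for every nonzero $e^* \geq 0$ in $L_\infty(\psi)$, strict positivity of $e^*$ is never needed. Irreducibility of $Q$ then plays no role in your argument.

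Your route gives a self-contained proof that uses only positivity and eventual compactness. Combined with the elementary bound $\EE_\psi \prod_{t=1}^n M_t = \|H^n \mathbbm 1\|_{L_1(\psi)} \leq \|H^n\|$, it even yields $r(H) = \lim_n (\EE_\psi \prod_{t=1}^n M_t)^{1/n}$. The paper's route is shorter but relies on an external result.

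Your fallback is essentially what the paper does, except that the relevant reference is \cite{borovivcka2021stability}, not \cite{stachurski2021dynamic}.
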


\begin{proof}
    \cref{a:app} and Theorem~C.7 of \cite{borovivcka2021stability} imply that
    $\rho(H)$, the spectral radius of $H$, is strictly less than one. Moreover,
    $T$ has the form $Tw = Ed + Hw$. It follows that
    $T$ is eventually contracting and therefore globally stable. 
\end{proof}

\begin{proof}[Proof of Proposition~\ref{p:lap}]
    Lemma~\ref{l:as_def} tells us that $(L_1(\psi \times \phi), S)$ 
    and $(L_1(\psi), T)$ are strongly semiconjugate under $E$ and $D$.
    We showed in Lemma~\ref{l:as1} that $(L_1(\psi), T)$ is globally stable.
    Clearly, $Dw = d + w$ is continuous.  All the claims now follow from
    \cref{t:sosft}.
\end{proof}

\subsection{Asset Pricing with Heterogeneous Beliefs} \label{ss:hb}

In previous applications, we demonstrated how the encoding-decoding methodology
can enable dimensional reduction, transforming complex problems into
simpler, lower-dimensional systems. We now turn to a different 
advantage: enabling proper application of stochastic approximation (SA) methods to
solve forward-looking stochastic models.

Stochastic approximation is a fundamental technique in economics, finance, and
operations research, with deep connections to reinforcement learning and other
machine learning algorithms (see, e.g., \cite{qu2020finite},
\cite{forneron2024estimation}, and \cite{borkar2025ode}). While SA offers
powerful tools for solving high-dimensional problems with imperfect information,
its direct application to nonlinear models can introduce systematic biases and
high variance. We show how strong semiconjugacy provides a transformation that
yields an unbiased, low-variance SA formulation.

In doing so, we consider an asset pricing model with
heterogeneous beliefs (see, e.g., \cite{harrison1978speculative},
\cite{morris1996speculative}, \cite{scheinkman2003overconfidence}, and
\cite{nutz2020shorting}). The model features $N$ types of agents with differing
beliefs about the future state of the economy. Agent $i$ believes the state
process $(X_t)$ on $\Xsf$ follows the kernel $P_i$, where $i
\in I \coloneq \{1, 2, \dots, N\}$. The asset price satisfies the recursion
\begin{equation}\label{eq:poa}
    \Pi_t = \max_{i \in I} \beta \EE_i [\Pi_{t+1} + Y_{t+1}],
    \qquad t = 0, 1, \ldots
\end{equation}
where $\EE_i$ is the $i$-th agent's
expectation operator (see, e.g., \cite{harrison1978speculative}). 
Let $Y_t=d(X_t)$ denote the dividend from holding the asset.
All agents have the same discount factor $\beta \in (0, 1)$.

Seeking to solve \eqref{eq:poa} leads us to consider the functional
equation
\begin{equation} \label{eq:eq_as}
    v(x) = \max_{i \in I} \beta \int \left\{
    v(x') + d(x')
    \right\} P_i(x, \diff x')
    \qquad (x \in \Xsf),
\end{equation}
where $v(x)$ is the asset price when the state is $x$. Here, we use the notation 
$v(x)$ for consistency with the notation in the previous sections.
A solution $v$ of \eqref{eq:eq_as} is a fixed point of 
the operator $S \colon b\Xsf \to b\Xsf$ given by
\begin{equation*}
    (S v)(x) = \max_{i \in I} 
    \beta \int \left\{
    v(x') + d(x')
    \right\} P_i(x, \diff x').
\end{equation*}
In particular, $v$ is a fixed point of $S$ in $b\Xsf$ if and only if $v$ solves
the equilibrium price condition \eqref{eq:eq_as}. Throughout this subsection we
assume that $d \in b\Xsf$.

From a theoretical perspective, the operator $S$ is relatively well-behaved.
However, it is challenging to compute a fixed point of $S$ when $\Xsf$ is
high-dimensional. Even though the integral associated with each $P_i$ can be
computed via Monte Carlo, evaluation of $Sv$ couples two numerically awkward steps: for every state 
$x \in \Xsf$, we need to (i) evaluate a separate high-dimensional expectation for each index 
$i \in I$, and then (ii) take the maximum of those noisy Monte Carlo estimates. 
This ``max-of-sample-means'' construction introduces a systematic upward bias 
(see, e.g., \cite{hasselt2010double} and \cite{van2016deep}).
We discuss this issue in more detail in Appendix~\ref{s:mc_approx}.
In addition, direct Monte Carlo fails to offer the bootstrapping associated with
stochastic approximation methods and hence has relatively high variance (see,
e.g., \cite{szepesvari2022algorithms}).

To enable proper application of SA, we transform the problem using strong semiconjugacy, 
yielding an unbiased formulation amenable to low-variance stochastic approximation. 
Let $W$ be the set of bounded Borel measurable functions on $\Xsf \times I$,
equipped with the supremum norm.
Define $T \colon W \to W$ by
\begin{equation*}
    (T w)(x, i) = \beta \int \left\{
    \max_{j \in I} w(x', j) + d(x')
    \right\} P_i(x, \diff x').
\end{equation*}
Before proceeding to numerical issues and computation of the fixed point of $T$
using stochastic approximation methods, we first show that we can solve the
fixed point problem~\eqref{eq:eq_as} by solving for the fixed point of $T$.

To solve \eqref{eq:eq_as} we introduce maps $E$ and $D$ defined by
\begin{equation*}
    (E v)(x, i) = \beta \int \left\{
    v(x') + d(x')
    \right\} P_i(x, \diff x')
    \quad \text{and} \quad
    (D w)(x) = \max_{i \in I} w(x, i).
\end{equation*}

It is easy to show that $E$ and $D$ obey $E \colon b\Xsf \to W$ and $D \colon W \to b\Xsf$.
We use $E$ and $D$ to connect $S$ and $T$ as follows:

\begin{lemma}\label{l:stf_asset}
    The systems $(b\Xsf, S)$ and $(W, T)$ are strongly semiconjugate under $E,D$.
\end{lemma}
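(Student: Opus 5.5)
The plan is to verify the two defining identities of strong semiconjugacy directly, by unwinding the definitions of $E$ and $D$ pointwise, as in the proofs of Lemma~\ref{l:stf} and Lemma~\ref{l:as_def}. Before that, I will confirm that $E$ maps $b\Xsf$ into $W$ and $D$ maps $W$ into $b\Xsf$, so that the compositions $D\circ E$ and $E\circ D$ are self-maps on the stated spaces.

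\textbf{Well-definedness.} Fix $v \in b\Xsf$. For each $i \in I$, the map $x \mapsto \beta\int \{v(x')+d(x')\}P_i(x,\diff x')$ is Borel measurable, since $P_i$ is a stochastic kernel and $v+d$ is bounded and measurable. It is bounded by $\beta(\|v\|_\infty+\|d\|_\infty)$. Because $I$ is finite, $(x,i)\mapsto (Ev)(x,i)$ is then bounded and measurable on $\Xsf\times I$, so $Ev \in W$. Now fix $w \in W$. Then $Dw = \max_{i\in I} w(\cdot,i)$ is a finite maximum of bounded measurable functions, hence $Dw \in b\Xsf$ with $\|Dw\|_\infty \leq \|w\|_\infty$.

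\textbf{The identity $S = D\circ E$ on $b\Xsf$.} For $v\in b\Xsf$ and $x\in\Xsf$,
\begin{equation*}
    (DEv)(x) = \max_{i\in I} (Ev)(x,i) = \max_{i\in I}\beta\int\{v(x')+d(x')\}P_i(x,\diff x') = (Sv)(x).
\end{equation*}

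\textbf{The identity $T = E\circ D$ on $W$.} For $w\in W$ and $(x,i)\in\Xsf\times I$,
\begin{equation*}
    (EDw)(x,i) = \beta\int\Bigl\{\max_{j\in I} w(x',j) + d(x')\Bigr\}P_i(x,\diff x') = (Tw)(x,i).
\end{equation*}
Together these two identities give strong semiconjugacy under $E,D$.

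\textbf{Main obstacle.} The only step that needs care is measurability in the well-definedness check. In particular, one has to confirm that integrating a bounded measurable function against a stochastic kernel yields a measurable function of $x$. This is the standard kernel property, and finiteness of $I$ handles the joint measurability on $\Xsf \times I$. Everything else is a direct substitution of definitions.
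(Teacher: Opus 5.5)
Your proposal is correct and matches the paper's proof, which verifies $S = D\circ E$ on $b\Xsf$ and $T = E\circ D$ on $W$ by the same pointwise substitution of definitions. Your additional well-definedness check (measurability via the kernel property of $P_i$ and finiteness of $I$, plus the obvious sup-norm bounds) fills in what the paper only asserts as ``easy to show'' just before the lemma.
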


\begin{proof}
    We have $S = D \circ E$ on $b\Xsf$ and $T = E \circ D$ on $W$.
    Indeed, fixing $v \in b\Xsf$, we have
    \begin{align*}
        (D E v)(x)
        & = \max_{i \in I} (E v)(x, i) 
        \\
        & = \max_{i \in I} \beta \int 
            \left\{
                v(x') + d(x')
            \right\} P_i(x, \diff x') 
        = (S v)(x).
    \end{align*}
    
    Moreover, for each $w \in W$, we have
    \begin{align*}
        (E D w)(x, i)
        & = \beta \int 
            \left\{
                (D w)(x') + d(x')
            \right\} P_i(x, \diff x') 
        \\
        & = \beta \int 
            \left\{
                \max_{j \in I} w(x', j) + d(x')
            \right\} P_i(x, \diff x') 
        = (T w)(x, i).
        \qedhere
    \end{align*}
\end{proof}

We can now state our main result on asset pricing with heterogeneous beliefs.

\begin{proposition}\label{p:heterogeneous}
    If $\beta \in (0,1)$ and $d \in b\Xsf$, then the asset pricing problem \eqref{eq:eq_as} has a
    unique solution $\bar v$ in $b\Xsf$. Moreover, for $w \in W$, 
    \begin{equation*}
        D T^n w \to \bar v \quad \text{as} \quad n \to \infty.
    \end{equation*}
\end{proposition}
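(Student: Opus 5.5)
The plan is to follow the same template as the proofs of Proposition~\ref{p:fk} and Proposition~\ref{p:lap}. Lemma~\ref{l:stf_asset} already gives strong semiconjugacy of $(b\Xsf, S)$ and $(W, T)$ under $E, D$. It therefore remains to show two things: that $(W, T)$ is globally stable, and that $D \colon W \to b\Xsf$ is continuous. Once both are in hand, \cref{t:sosft} yields that $(b\Xsf, S)$ is globally stable with unique fixed point $\bar v = D \bar w$, and that $D T^n w \to \bar v$ for every $w \in W$. This is exactly the claim, since fixed points of $S$ in $b\Xsf$ are precisely the solutions of \eqref{eq:eq_as} in $b\Xsf$.

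First I will confirm briefly that $T$ maps $W$ into $W$. For $w \in W$, the function $x' \mapsto \max_{j \in I} w(x', j) + d(x')$ is bounded and Borel, because $I$ is finite and $d \in b\Xsf$. Integrating against the kernel $P_i(x, \cdot)$ therefore produces a bounded measurable function of $x$ for each $i$, with bound $\beta(\|w\|_\infty + \|d\|_\infty)$. The same reasoning shows $E \colon b\Xsf \to W$ and $D \colon W \to b\Xsf$.

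For global stability, I will show that $T$ is a $\beta$-contraction on $(W, \|\cdot\|_\infty)$. Fix $w_1, w_2 \in W$ and $(x, i)$. The $d$ terms cancel, so
\begin{equation*}
    |(Tw_1)(x,i) - (Tw_2)(x,i)| \leq \beta \int \Bigl|\max_{j} w_1(x',j) - \max_{j} w_2(x',j)\Bigr| P_i(x, \diff x') \leq \beta \|w_1 - w_2\|_\infty .
\end{equation*}
The second inequality uses the elementary bound $|\max_j f(j) - \max_j g(j)| \leq \max_j |f(j) - g(j)|$, as in Section~\ref{ss:game}, together with the fact that $P_i(x,\cdot)$ is a probability measure. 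Taking the supremum over $(x,i)$ gives the contraction. Since $W$ is complete under the sup norm, the Banach fixed point theorem gives global stability of $(W, T)$.

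Continuity of $D$ follows from the same max inequality: $\|Dw_1 - Dw_2\|_\infty \leq \|w_1 - w_2\|_\infty$, so $D$ is $1$-Lipschitz. None of these steps is a real obstacle. The only point requiring care is measurability, namely that the integrals against $P_i$ define Borel functions. This is standard for stochastic kernels applied to bounded Borel functions, and I will simply note it. With the three hypotheses of \cref{t:sosft} verified, the proposition follows.
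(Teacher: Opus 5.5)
Your proposal is correct and follows essentially the same route as the paper. You take strong semiconjugacy from Lemma~\ref{l:stf_asset} and the $\beta$-contraction argument of Lemma~\ref{l:torder} for global stability of $(W,T)$, establish continuity of $D$ (which the paper calls trivial and you make explicit as $1$-Lipschitz), and then apply \cref{t:sosft}, with your self-map and measurability checks simply filling in details the paper leaves implicit.
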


The existence and uniqueness claim in Proposition~\ref{p:heterogeneous} is established by (a)
proving that the system $(W, T)$ has a unique fixed point via the contraction
property and (b) translating this unique fixed point in the ``nice'' system $(W,
T)$ to a unique fixed point for $S$ via the strong semiconjugacy described in
Lemma~\ref{l:stf_asset}. Then we use continuity of $D$ and strong semiconjugacy to
establish the convergence results. 

\begin{lemma}\label{l:torder}
    The system $(W, T)$ is globally stable.
\end{lemma}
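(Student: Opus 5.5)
The plan is to show that $T$ is a contraction of modulus $\beta$ on $W$ under the supremum norm, and then apply the Banach fixed point theorem. This is the same route used for the constant-discount real option problem in Section~\ref{ss:analytical} and for the game in Section~\ref{ss:game}.

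\textbf{Step 1: $T$ is a self-map on $W$.} Fix $w \in W$ and set $g(x') \coloneq \max_{j \in I} w(x', j) + d(x')$.
\begin{itemize}
\item Since $I$ is finite and each $w(\cdot, j)$ is bounded and Borel measurable, the pointwise maximum $x' \mapsto \max_{j} w(x',j)$ is bounded and Borel measurable.
\item Because $d \in b\Xsf$, the function $g$ is bounded and measurable, with $\|g\|_\infty \leq \|w\|_\infty + \|d\|_\infty$.
\item For each fixed $i$, the map $x \mapsto \int g(x') P_i(x, \diff x')$ is Borel measurable by the standard measurability property of stochastic kernels. It is bounded by $\|g\|_\infty$ because each $P_i(x,\cdot)$ is a probability measure.
\item Measurability on $\Xsf \times I$ follows since $I$ is finite and discrete.
\end{itemize}
Hence $Tw \in W$.

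\textbf{Step 2: contraction estimate.} Fix $w_1, w_2 \in W$ and $(x,i)$. The $d$ terms cancel, so
\begin{equation*}
    |(Tw_1)(x,i) - (Tw_2)(x,i)|
    \leq \beta \int \Bigl| \max_{j} w_1(x',j) - \max_j w_2(x',j) \Bigr| P_i(x, \diff x').
\end{equation*}
The elementary inequality $|\max_j a_j - \max_j b_j| \leq \max_j |a_j - b_j|$ bounds the integrand by $\|w_1 - w_2\|_\infty$. Since $P_i(x,\cdot)$ is a probability measure, the right-hand side is at most $\beta\|w_1 - w_2\|_\infty$. Taking the supremum over $(x,i)$ gives
\begin{equation*}
    \|Tw_1 - Tw_2\|_\infty \leq \beta \|w_1 - w_2\|_\infty .
\end{equation*}

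\textbf{Step 3: conclusion.} The space $W$ of bounded Borel measurable functions on $\Xsf \times I$ with the supremum norm is complete, because uniform limits of measurable functions are measurable. Since $\beta \in (0,1)$, the Banach fixed point theorem gives a unique fixed point $\bar w \in W$ and $T^n w \to \bar w$ for every $w \in W$. This is exactly global stability of $(W,T)$.

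The only point needing care is the measurability in Step 1. I expect it to be routine given finiteness of $I$ and the kernel property of each $P_i$; everything else is the standard sup-norm contraction argument.
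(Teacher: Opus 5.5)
Your proposal is correct and follows essentially the same route as the paper: the paper also shows $\|Tw_1 - Tw_2\| \leq \beta \|w_1 - w_2\|$ by bounding the integrand with $\max_j |w_1(x',j) - w_2(x',j)|$, and then invokes the Banach fixed point theorem on the complete space $W$. Your Steps 1 and 3, which check that $T$ maps $W$ into itself and that $W$ is complete, make explicit what the paper takes for granted.
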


\begin{proof}
    For any $w_1, w_2 \in W$ and all $(x,i) \in \Xsf \times I$,
    \begin{equation*}
    |(Tw_1)(x,i)-(Tw_2)(x,i)|
    \leq \beta \int \max_j|w_1(x',j)-w_2(x',j)|P_i(x,\diff x') 
    \leq \beta \|w_1-w_2\|.
    \end{equation*}
    Taking the supremum over all $(x,i) \in \Xsf \times I$, we obtain $\|Tw_1 -
    Tw_2\| \leq \beta\|w_1 - w_2\|$. Since $\beta \in (0,1)$, $T$ is a
    contraction mapping on the complete space $W$. Hence, by the Banach fixed
    point theorem, $(W,T)$ is globally stable.
\end{proof}

\begin{proof}[Proof of Proposition~\ref{p:heterogeneous}]
    Lemma~\ref{l:stf_asset} tells us that $(b\Xsf, S)$ and $(W, T)$ are strongly
    semiconjugate under $E,D$. In Lemma~\ref{l:torder}, we proved that $(W, T)$ is
    globally stable with unique fixed point $\bar w \in W$. It is trivial that $D$ 
    is continuous. All the claims in
    Proposition~\ref{p:heterogeneous} now follow from Theorem~\ref{t:sosft}.
\end{proof}

Proposition~\ref{p:heterogeneous} shows that the fixed point problem 
\eqref{eq:eq_as} can be solved by iterating the operator $T$ on $W$ and then
mapping the fixed point back with $\bar v = D \bar{w}$. To demonstrate this with
a numerical example, and to exploit the relative benefits of the operator $T$
discussed above, we first introduce a stochastic approximation alternative to
$T$ based on noisy observations, given by
\begin{equation} \label{eq:T_hat_ash}
    (\hat T w)(x,i) = \beta\bigl[
        \max_{j \in I} w(x',j) + d(x')
    \bigr] \quad \text{where} \quad x' \sim P_i(x,\cdot).
\end{equation}
Equation~\eqref{eq:T_hat_ash} defines $\hat T w$ pointwise. To obtain
a random operator $\hat T_t \colon W \to W$ in the sense
of~\cite{mou2022optimal}, at each iteration $t$ we draw independent
samples $\{X'_{t,x,i}\}_{(x,i) \in \Xsf \times I}$ with
$X'_{t,x,i} \sim P_i(x,\cdot)$, independently across $t$, $x$, and $i$,
and evaluate \eqref{eq:T_hat_ash} at every $(x,i)$.
In the computational implementation, $\Xsf$ is a finite grid, so only finitely
many draws are needed per iteration and $W$ is finite-dimensional.

Rather than using direct Robbins--Monro stochastic approximation,
we use the ROOT-SA algorithm~\citep{mou2022optimal} to solve the asset pricing problem.
We adopt the ROOT-SA algorithm because it has better convergence and variance
properties than the Robbins--Monro algorithm in our setting. The details of the
algorithm are given in Appendix~\ref{s:rootsa}.

Figure~\ref{fig:rootsa_pi} compares the ROOT-SA approximation
$D w_M$ to the true equilibrium price function $\bar v$, with $N=2$ agents. Each agent believes 
that the state evolves according to an AR(1) process for 
$z_t = \log x_t$. Hence, for agent $i$:
\begin{equation*}
   z_{t+1} = \rho_i z_t + \nu_i \varepsilon_{t+1}
\end{equation*}
where $\varepsilon_{t+1} \sim N(0,1)$ is an {\sc iid} standard normal random variable.
We discretize the state space using the Tauchen method~\citep{tauchen1986finite} 
with $\rho_1 = 0.1$, $\nu_1 = 0.015$, $\rho_2 = 0.1$, $\nu_2 = 0.01$.

Figure~\ref{fig:rootsa_pi} shows that the ROOT-SA approximation
$D w_M$ closely matches the true
equilibrium price function $\bar v$ calculated by iterating the operator $S$ over a
grid of state values, providing numerical evidence that the ROOT-SA scheme
applied to $\hat T$ converges to the correct fixed point.

\section{Extension: Order-Preserving Transformations}\label{s:opts}

Many applications in economics and finance involve partially ordered spaces
where equilibrium objects exhibit natural monotonicity properties
(see, e.g., \cite{milgrom1994monotone} and \cite{topkis1998supermodularity}).
In this section, we extend our framework to exploit such structure.
Section~\ref{ss:pm} introduces parametric monotonicity results showing
that monotone shifts in a low-dimensional system translate into monotone shifts
in the corresponding high-dimensional system.
Section~\ref{ss:app_opts} applies these results to the real option problem
from Section~\ref{s:egf}, demonstrating that an increase in the discount factor
raises the value of the firm.

\subsection{Parametric Monotonicity}\label{ss:pm}

A \navy{partially ordered space} is a poset $(V, \preceq)$
endowed with a Hausdorff topology $\tau$ and having the property that $\preceq$
is closed with respect to $\tau$.  Here ``closed'' means that the graph of
$\preceq$ is closed in $V \times V$ under the product topology. Throughout this section, let $V$ and $W$ be two partially ordered spaces. In this setting a
map $F \colon V \to W$  is called
\navy{order preserving} if $v \preceq w$ implies $Fv \preceq Fw$.
We say that a map $F \colon V \to W$ \navy{dominates}
a map $G \colon V \to W$ if $G v \preceq F v$ for all $v
\in V$.

\begin{theorem}\label{t:monotone}
    Let $\Theta$ be a parameter space. For each $\theta \in \Theta$, suppose that
    \begin{enumerate}
        \item $(V, S_\theta)$ and $(W, T_\theta)$ are strongly semiconjugate under
        $E_\theta, D_\theta$,
        \item $E_\theta$ and $D_\theta$ are order preserving, and
        \item $(W, T_\theta)$ is globally stable with unique fixed point $w_\theta$.
    \end{enumerate}
    Then $v_\theta \coloneq D_\theta w_\theta$ is the unique fixed point of
    $(V, S_\theta)$. Moreover, if
    $E_{\theta_2}$ dominates $E_{\theta_1}$, and $D_{\theta_2}$ dominates
    $D_{\theta_1}$, where $\theta_1, \theta_2 \in \Theta$, then
    \begin{equation*}
        w_{\theta_1} \preceq w_{\theta_2},
        \quad \text{and} \quad
        v_{\theta_1} \preceq v_{\theta_2}.
    \end{equation*}
\end{theorem}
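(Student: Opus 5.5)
The plan is to obtain the uniqueness claim directly from \cref{p:foo0}, and then to prove the two monotonicity claims in order: first for the low-dimensional fixed points $w_\theta$, then for the decoded fixed points $v_\theta$. The low-dimensional comparison uses a standard ``dominating order-preserving operator'' argument, and the closedness of $\preceq$ lets us pass to the limit.

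\emph{Uniqueness.} Fix $\theta$. By (iii), $\fix(T_\theta)=\{w_\theta\}$. By part (iii) of \cref{p:foo0}, $\fix(S_\theta)$ is then a singleton. By part (ii), $D_\theta w_\theta \in \fix(S_\theta)$, so $v_\theta = D_\theta w_\theta$ is the unique fixed point. This step needs no continuity of $D_\theta$, so \cref{t:sosft} is not invoked here.

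\emph{Domination of $T$ and monotonicity of $T_{\theta_2}$.} Write $E_k = E_{\theta_k}$, $D_k = D_{\theta_k}$ and $T_k = T_{\theta_k}$.
\begin{itemize}
\item For any $w \in W$, domination of $D_1$ by $D_2$ gives $D_1 w \preceq D_2 w$.
\item Since $E_1$ is order preserving, $E_1 D_1 w \preceq E_1 D_2 w$.
\item Domination of $E_1$ by $E_2$ gives $E_1 D_2 w \preceq E_2 D_2 w$.
\item By transitivity, $T_1 w \preceq T_2 w$ for all $w \in W$.
\item $T_2 = E_2 \circ D_2$ is order preserving, as a composition of order-preserving maps.
\end{itemize}

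\emph{Comparison of the $w_\theta$.} Applying the domination at $w = w_1 := w_{\theta_1}$ gives $w_1 = T_1 w_1 \preceq T_2 w_1$. Applying the order-preserving $T_2$ repeatedly gives the chain
\[
    w_1 \preceq T_2 w_1 \preceq T_2^2 w_1 \preceq \cdots,
\]
so by induction and transitivity $w_1 \preceq T_2^n w_1$ for every $n$. By global stability of $(W,T_{\theta_2})$, $T_2^n w_1 \to w_{\theta_2}$.

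The pairs $(w_1, T_2^n w_1)$ lie in the graph of $\preceq$ and converge to $(w_1, w_{\theta_2})$ in the product topology. Since the graph is closed, $w_{\theta_1} \preceq w_{\theta_2}$. The space is Hausdorff, so the limit is unique and this is well defined.

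I expect this limit passage to be the only delicate step. It is exactly where the partially-ordered-space hypothesis is used. No continuity of $E_\theta$ or $D_\theta$ is needed, because the limit is taken only inside $W$.

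\emph{Comparison of the $v_\theta$.} By domination of $D_{\theta_1}$ by $D_{\theta_2}$, and then order preservation of $D_{\theta_2}$ applied to $w_{\theta_1} \preceq w_{\theta_2}$,
\[
    v_{\theta_1} = D_{\theta_1} w_{\theta_1} \preceq D_{\theta_2} w_{\theta_1} \preceq D_{\theta_2} w_{\theta_2} = v_{\theta_2}.
\]
This completes the plan.
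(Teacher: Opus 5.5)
Your proposal is correct and follows essentially the same route as the paper: establish $T_{\theta_1} w \preceq T_{\theta_2} w$ for all $w$, compare iterates started at $w_{\theta_1}$, pass to the limit using closedness of $\preceq$, and finish with $D_{\theta_1} w_{\theta_1} \preceq D_{\theta_2} w_{\theta_1} \preceq D_{\theta_2} w_{\theta_2}$. The differences are cosmetic. Your domination chain goes through $E_{\theta_1} D_{\theta_2} w$ rather than $E_{\theta_2} D_{\theta_1} w$, and you build the increasing sequence $T_{\theta_2}^n w_{\theta_1}$ rather than comparing $T_{\theta_1}^n w_{\theta_1}$ with $T_{\theta_2}^n w_{\theta_1}$. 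You also derive uniqueness of $v_\theta$ explicitly from \cref{p:foo0}, which the paper's proof leaves implicit.
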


\begin{proof}[Proof of Theorem~\ref{t:monotone}]
    For any $w \in W$, we have
    \begin{equation*}
        T_{\theta_1} w = E_{\theta_1} (D_{\theta_1} w)
        \preceq E_{\theta_2} (D_{\theta_1} w)
        \preceq E_{\theta_2} (D_{\theta_2} w)
        = T_{\theta_2} w.
    \end{equation*}
    The first inequality follows from that $E_{\theta_2}$ dominates $E_{\theta_1}$,
    and the second inequality follows from monotonicity of $E_{\theta_2}$ and that
    $D_{\theta_2}$ dominates $D_{\theta_1}$. Hence, $T_{\theta_2}$ dominates
    $T_{\theta_1}$. Since $T_{\theta_2}$ is order preserving, setting $w =
    w_{\theta_1}$ and iterating gives
    \begin{equation*}
        w_{\theta_1} = T_{\theta_1}^n w_{\theta_1}
        \preceq T_{\theta_2}^n w_{\theta_1}
        \quad \text{for all } n \in \NN.
    \end{equation*}
    Since $T_{\theta_2}^n w_{\theta_1} \to w_{\theta_2}$ and the partial order
    on $W$ is closed, we conclude $w_{\theta_1} \preceq w_{\theta_2}$.
    Finally,
    \begin{equation*}
        v_{\theta_1} = D_{\theta_1} w_{\theta_1}
        \preceq D_{\theta_2} w_{\theta_1}
        \preceq D_{\theta_2} w_{\theta_2}
        = v_{\theta_2},
    \end{equation*}
    where the first inequality uses that $D_{\theta_2}$ dominates $D_{\theta_1}$,
    and the second uses monotonicity of $D_{\theta_2}$.
\end{proof}

\subsection{Applications}\label{ss:app_opts}

We now apply \cref{t:monotone} to the real option problem from
Section~\ref{s:egf}. Consider two parameterizations of the discount factor
$\beta_1$ and $\beta_2$ with $\beta_1(x) \leq \beta_2(x)$ for all $x \in
\Xsf$.  Each parameterization generates its own operator $S_k$ on $\ell \Xsf$
and $T_k$ on $b\Zsf$ via the definitions in Section~\ref{ss:fv}, along with
encoder $E_k$ and decoder $D_k$ for $k \in \{1,2\}$.  In particular, 
we recall that $D_k$ and $E_k$ are defined via
\begin{equation*}
    (D_k w)(z, \xi) = \max
    \left\{
        \pi(z, \xi), \; -c(z, \xi) + \beta_k(z, \xi) \, \int w(z') Q(z, \diff z')
    \right\},
\end{equation*}
and
\begin{equation*}
    (E_k v)(z) = (E v)(z) = \int v(z, \xi) \phi(\diff \xi).
\end{equation*}

We equip $\ell \Xsf$ and $b\Zsf$ with their respective pointwise orders
(for $\ell\Xsf$, we define $v \geq v'$ whenever $v(z, \xi) \geq v'(z, \xi)$
for $\phi$-almost every $\xi$ and every $z$).
Let $b_+\Zsf \coloneq \{w \in b\Zsf : w \geq 0\}$ and
$\ell_+\Xsf \coloneq \{v \in \ell\Xsf : v \geq 0\}$ denote the nonnegative
cones. The maps $E_k$ and $D_k$ are order preserving since the integral and the
max preserve the pointwise order. The encoder $E_k = E$ is independent of
$\beta_k$, so $E_2$ trivially dominates $E_1$. The semiconjugacy
$S_k = D_k \circ E_k$, $T_k = E_k \circ D_k$ on the restricted cones follows
from the same calculation as in Lemma~\ref{l:stf}. The following lemma 
verifies conditions for \cref{t:monotone}.

\begin{lemma}\label{l:nonneg_cone}
    Suppose $\pi \geq 0$ pointwise and \cref{a:fv} holds for each
    $\beta_k$ ($k \in \{1,2\}$). Then
    \begin{enumerate}
        \item $T_k$ is a self-map on $b_+\Zsf$, and the unique fixed point
            $\bar w_k$ of $T_k$ in $b\Zsf$ lies in $b_+\Zsf$,
        \item $D_2$ dominates $D_1$ on $b_+\Zsf$.
    \end{enumerate}
\end{lemma}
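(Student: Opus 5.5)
For (i), I will first show that $T_k$ maps $b_+\Zsf$ into itself. Fix $w \in b_+\Zsf$. By \cref{l:gs} (or directly from \cref{l:gfmt} and \cref{l:stf}) we already know $T_k w \in b\Zsf$, so only nonnegativity needs checking. Since $\pi \geq 0$ pointwise, for every $(z,\xi)$ we have
\[
    (D_k w)(z,\xi) = \max\left\{\pi(z,\xi), \; -c(z,\xi) + \beta_k(z,\xi)\int w(z')Q(z,\diff z')\right\} \geq \pi(z,\xi) \geq 0 .
\]
Integrating against $\phi$ then gives $(T_k w)(z) = (E D_k w)(z) \geq 0$. This lower bound uses only $\pi \geq 0$, so it holds for every $w \in b\Zsf$, and in fact $T_k(b\Zsf) \subseteq b_+\Zsf$.

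To locate the fixed point, I will use the identity $\bar w_k = T_k \bar w_k$. The previous step shows that the image of $T_k$ lies in $b_+\Zsf$, so $\bar w_k \in b_+\Zsf$ immediately. An alternative route is to start iterating from $w \equiv 0$, note that every iterate stays in $b_+\Zsf$, and use the global stability from \cref{l:gs} together with closedness of $b_+\Zsf$ under the sup norm. I will mention this only as a remark; the direct argument suffices.

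For (ii), fix $w \in b_+\Zsf$ and $(z,\xi)$. Since $w \geq 0$ and $Q(z,\cdot)$ is a probability measure, $\int w(z')Q(z,\diff z') \geq 0$. Multiplying by $\beta_1(z,\xi) \leq \beta_2(z,\xi)$ preserves the inequality, so the second argument of the max under $\beta_1$ is at most the second argument under $\beta_2$. The first argument $\pi$ is the same in both, and the max is monotone in each argument, so $(D_1 w)(z,\xi) \leq (D_2 w)(z,\xi)$. This inequality holds pointwise, hence also in the order on $\ell\Xsf$.

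The main obstacle is mild: the comparison in (ii) depends on the sign of $\int w\,\diff Q$, which is exactly why we restrict to the cone. This is also why (i) is needed: it ensures the fixed points and iterates live in $b_+\Zsf$ when \cref{t:monotone} is applied on the restricted spaces.
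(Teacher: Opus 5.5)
Your argument is correct and follows the paper's proof, with one difference in how you locate $\bar w_k$: you use the stronger observation $T_k(b\Zsf) \subseteq b_+\Zsf$, which holds because $D_k w \geq \pi \geq 0$ for every $w$, not just nonnegative ones, and then read off $\bar w_k = T_k \bar w_k \geq 0$ directly, whereas the paper iterates from a point of the cone and combines the global stability from \cref{l:gs} with closedness of $b_+\Zsf$. Your shortcut is slightly more economical, since it needs \cref{l:gs} only for the existence of $\bar w_k$; the paper's invariance-plus-closedness argument is the general template, which would still work if the cone were merely invariant under $T_k$ rather than containing its whole range.
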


\begin{proof}
    For (i), fix $w \in b_+\Zsf$. Then
    $(D_k w)(z,\xi) \geq \pi(z,\xi) \geq 0$, so
    $T_k w = E(D_k w) \geq 0$, confirming that $T_k$ maps $b_+\Zsf$ into
    itself. By \cref{l:gs}, $(b\Zsf, T_k)$ is globally stable, so
    $T_k^n w \to \bar w_k$ for any $w \in b_+\Zsf$. Since $b_+\Zsf$ is
    invariant under $T_k$, every iterate $T_k^n w$ lies in $b_+\Zsf$.
    Because $b_+\Zsf$ is closed, $\bar w_k \in b_+\Zsf$.

    For (ii), fix $w \in b_+\Zsf$. Then $\int w(z') \, Q(z, \diff z') \geq 0$,
    so $\beta_1 \leq \beta_2$ gives
    $-c + \beta_1 \int w \, \diff Q \leq -c + \beta_2 \int w \, \diff Q$.
    Taking the max with $\pi$ on both sides yields
    $(D_1 w)(z,\xi) \leq (D_2 w)(z,\xi)$ for all $(z,\xi)$.
\end{proof}

\begin{proposition}\label{p:beta_mono}
    Suppose $\pi \geq 0$ pointwise and \cref{a:fv} holds for both $\beta_1$
    and $\beta_2$. Then the unique fixed points $\bar v_k$ of $S_k$ satisfy
    $\bar v_1 \preceq \bar v_2$.
\end{proposition}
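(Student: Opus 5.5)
We apply \cref{t:monotone} with $\Theta = \{1,2\}$, taking $V = \ell_+\Xsf$ and $W = b_+\Zsf$ with their pointwise orders. The restriction to the cones matters because $D_2$ dominates $D_1$ only on $b_+\Zsf$ (\cref{l:nonneg_cone}(ii)). The conclusion $\bar v_1 \preceq \bar v_2$ then follows once we check that the hypotheses of \cref{t:monotone} hold on the cones, and that the fixed point produced there agrees with the unique fixed point of $S_k$ on all of $\ell\Xsf$ from \cref{p:fk}.

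\emph{Step 1 (the cones are invariant).} We show $E$ maps $\ell_+\Xsf$ into $b_+\Zsf$, $D_k$ maps $b_+\Zsf$ into $\ell_+\Xsf$, and hence $S_k$ and $T_k$ are self-maps on the respective cones. For $D_k$, we have $(D_k w)(z,\xi) \geq \pi(z,\xi) \geq 0$, and \cref{l:gfmt} gives membership in $\ell\Xsf$. For $E$, nonnegativity is preserved by integration against $\phi$. The identities $S_k = D_k E$ and $T_k = E D_k$ are the computation in \cref{l:stf}, now restricted to the cones, so hypothesis (i) holds. Hypothesis (ii) is the order-preservation remark made before \cref{l:nonneg_cone}.

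\emph{Step 2 (global stability and order-theoretic setup).} By \cref{l:gs}, $(b\Zsf, T_k)$ is globally stable. By \cref{l:nonneg_cone}(i), its fixed point $\bar w_k$ lies in $b_+\Zsf$, and $b_+\Zsf$ is invariant. Hence $(b_+\Zsf, T_k)$ is globally stable with fixed point $\bar w_k$, which is hypothesis (iii).

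The pointwise order on $b_+\Zsf$ must be closed in the sup-norm topology. This holds because sup-norm convergence implies pointwise convergence, and pointwise inequalities survive pointwise limits. The proof of \cref{t:monotone} uses closedness only on $W$, but for completeness the order on $\ell_+\Xsf$ should also be closed. That is the delicate point given the $\phi$-a.e.\ convention: convergence in the norm \eqref{eq:lbo} gives $L_1(\phi)$ convergence of each section $v_n(z,\cdot)$, hence a $\phi$-a.e.\ convergent subsequence for each fixed $z$. Inequalities $v_n(z,\cdot)\le v_n'(z,\cdot)$ a.e.\ therefore pass to the limit. So the order is sequentially closed, which is all the argument uses.

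\emph{Step 3 (dominance).} $E_2 = E_1 = E$, so $E_2$ trivially dominates $E_1$, and \cref{l:nonneg_cone}(ii) gives that $D_2$ dominates $D_1$ on $b_+\Zsf$. \cref{t:monotone} then yields $\bar w_1 \le \bar w_2$ and $D_1\bar w_1 \preceq D_2 \bar w_2$. By \cref{p:fk} (equivalently \cref{t:sosft}), $D_k \bar w_k$ is the unique fixed point $\bar v_k$ of $S_k$ in $\ell\Xsf$. It coincides with the fixed point on the cone because both equal $D_k\bar w_k$. Therefore $\bar v_1 \preceq \bar v_2$.

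The main obstacle is bookkeeping rather than new mathematics: making sure the cone restriction is legitimate, namely invariance and global stability on the subspace, and that the closedness of the order required in \cref{t:monotone} holds in the stated topologies. The a.e.\ subtlety in $\ell\Xsf$ is avoided, since the comparison of fixed points goes through $W = b_+\Zsf$ and the final step uses only dominance and monotonicity of $D_2$, not closedness in $V$.
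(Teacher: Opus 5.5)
Your proposal is correct and follows the paper's own argument: apply Theorem~\ref{t:monotone} on the cones $(b_+\Zsf, T_k)$ and $(\ell_+\Xsf, S_k)$, use Lemma~\ref{l:nonneg_cone} to place $\bar w_k$ in the cone and to get that $D_2$ dominates $D_1$, and note that $E_1 = E_2$ dominates trivially. Your additional checks (invariance of the cones, closedness of the orders, and identifying $D_k \bar w_k$ with the unique fixed point of $S_k$ on all of $\ell\Xsf$ via Proposition~\ref{p:fk}) only spell out what the paper's one-line proof leaves implicit.
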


\begin{proof}
    By \cref{l:nonneg_cone}, the fixed points $\bar w_k$ lie in $b_+\Zsf$
    and $D_2$ dominates $D_1$ on $b_+\Zsf$. All hypotheses of
    \cref{t:monotone} are satisfied on $(b_+\Zsf, T_k)$ and
    $(\ell_+\Xsf, S_k)$, yielding $\bar w_1 \preceq \bar w_2$ and
    $\bar v_1 = D_1 \bar w_1 \preceq D_2 \bar w_2 = \bar v_2$.
\end{proof}

\cref{p:beta_mono} confirms the intuitive economic insight: when the firm
discounts the future less (higher $\beta$), it places greater weight on the
option to wait and enter later, thereby increasing the overall value of the
firm.

\appendix

\section{Implementation Details for Section~\ref{ss:comp_fv2}}\label{s:impl_details}

\subsection{Parameter Values and Experimental Setup} \label{a:params}

Table~\ref{tab:parameters} provides a comprehensive overview of 
all parameter values used in the computational experiments presented 
in Figure~\ref{fig:numba_speedup}. The computation was performed on a machine with
an AMD Ryzen 9 9900X and 32GB RAM. The experiment was run on a single thread 
to ensure a fair comparison between the two methods.

\begin{table}[ht]
\centering
\caption{Parameter specification for computational experiments}
\label{tab:parameters}
\begin{tabular}{lll}
\hline
\textbf{Parameter} & \textbf{Value} & \textbf{Description} \\
\hline
\multicolumn{3}{c}{Grid Specifications} \\
$n_z = (n_0, n_1, n_2, n_3)$ & $(4, 4, 4, 4)$ & Persistent state grid sizes \\
$m_\xi = (m_0, m_1, m_2, m_3)$ & $(m, m, m, m)$ & Transient shock grid sizes \\
\hline
\multicolumn{3}{c}{AR(1) Process Parameters} \\
$\rho = (\rho_0, \rho_1, \rho_2, \rho_3)$ & $(0.9, 0.9, 0.9, 0.9)$ 
& Persistence parameters \\
$\nu = (\nu_0, \nu_1, \nu_2, \nu_3)$ & $(0.1, 0.1, 0.2, 0.2)$ 
& Volatility parameters \\
\hline
\multicolumn{3}{c}{Shock Distribution Parameters} \\
$\mu = (\mu_0, \mu_1, \mu_2, \mu_3)$ & $(0.0, 0.0, 0.0, 0.0)$ 
& Log-normal means \\
$\sigma = (\sigma_0, \sigma_1, \sigma_2, \sigma_3)$ & $(0.1, 0.1, 0.1, 0.1)$ 
& Log-normal std. deviations \\
\hline
\multicolumn{3}{c}{Economic Parameters} \\
$\theta$ & 0.5 & Production elasticity parameter \\
$\beta_{\min}$ & 0.05 & Lower bound for discount factor \\
$\beta_{\max}$ & 0.95 & Upper bound for discount factor \\
\hline
\end{tabular}
\end{table}

\section{The ROOT-SA Algorithm}\label{s:rootsa}

Algorithm~\ref{alg:rootsa} adapts ROOT-SA~\citep{mou2022optimal} to the
heterogeneous-beliefs model of Section~\ref{ss:hb}. Throughout this appendix,
$\Xsf$ denotes the finite computational grid and $W$ the sup-norm space of
functions on $\Xsf\times I$.
The algorithm maintains two sequences: the parameter sequence $\{w_t\}$,
which is the main iterate, and an auxiliary variance-reduced sequence
$\{\theta_t\}$.
The key feature of ROOT-SA is that, at each step $t > B_0$, the
same stochastic oracle $\hat T_t$ is evaluated at both $w_{t-1}$
and $w_{t-2}$ to form the variance-reduced update~\citep[Algorithm~1]{mou2022optimal}.
The burn-in period $B_0$ is used to initialize $\theta$, and the
output $w_M$ approximates the fixed point $\bar{w}$ of $T$.

The stepsize $\alpha$ and burn-in length $B_0$ must satisfy the
admissibility conditions of~\cite{mou2022optimal}: specifically,
$\alpha$ must lie in the range
\begin{equation*}
    \alpha \in \biggl(0,\;
    \frac{(1-\beta)^2}
         {c\,\beta^2\, J_2^2(\Gamma,\rho_M)\,\log(M/\delta)}
    \biggr]
\end{equation*}
and $B_0 \geq c\, (1-\beta)^{-2}\alpha^{-1}\log(M/\delta)$, where $c>0$ is
universal and
$J_2$ denotes the order-$2$ Dudley entropy integral of the dual ball.
In our experiments, we set
$\alpha = a_0 / \sqrt{M - B_0}$ with $a_0 > 0$ chosen small enough
that $\alpha$ falls in the admissible range for the given $M$ and $\delta$.
Since $D$ is continuous on $(W, \|\cdot\|)$, $D w_M$ approximates the equilibrium price
function $\bar v = D\bar w$.

\begin{algorithm}[ht]
    \caption{\label{alg:rootsa}ROOT-SA for Heterogeneous-Beliefs Asset Pricing}
    \small
    \DontPrintSemicolon
    \KwIn{Finite grid $\Xsf$; initial approximation $w_0 \in W$; burn-in length $B_0 \geq 2$;
    stepsize $\alpha > 0$; total iterations $M$}
    \KwOut{Approximation $w_M$ of the fixed point $\bar w$}
    $\theta_0 \gets 0$; \quad $w_{-1} \gets w_0$ \tcp*{initialization}
    \For{$t = 1$ \KwTo $M$}{
        \For{each $(x,i) \in \Xsf \times I$}{
            Sample $X'_{t,x,i} \sim P_i(x,\cdot)$ independently\;
        }
        $\delta_t^{\mathrm{cur}} \gets \hat T_t w_{t-1} - w_{t-1}$
        \tcp*{current-iterate residual}

        \eIf{$t \leq B_0$}{
            $\theta_t \gets \theta_{t-1} +
            \dfrac{1}{B_0}\,\delta_t^{\mathrm{cur}}$
            \tcp*{accumulate burn-in average}
            $w_t \gets w_0$  \tcp*{freeze main iterate in burn-in}
        }{
            $\delta_t^{\mathrm{prev}} \gets \hat T_t w_{t-2} - w_{t-2}$
            \tcp*{same oracle at previous iterate}
            $\theta_t \gets \delta_t^{\mathrm{cur}} + \dfrac{t-1}{t}\bigl(\theta_{t-1} - \delta_t^{\mathrm{prev}}\bigr)$ \\
            $w_t \gets w_{t-1} + \alpha \theta_t$ \tcp*{SA step on $w$}
        }
    }
    \KwRet{$w_M$}
\end{algorithm}

We now verify that our setting fits the framework
of~\cite{mou2022optimal}. Their theory applies to a separable Banach space
$(\mathbb{V}, \|\cdot\|)$ equipped with a population operator $\mathbf{h}$ and
a sequence of stochastic oracles $\{\mathbf{H}_t\}_{t \geq 1}$, each
mapping $\mathbb{V}$ to itself, satisfying assumptions (A1)--(A3) of that paper.
In our setting, $\mathbb{V} = W
\cong \RR^{|\Xsf| \times N}$ is the
finite-dimensional space of functions on $\Xsf \times I$, equipped with the sup-norm,
$\mathbf{h} = T$, and the stochastic oracle $\mathbf{H}_t = \hat T_t$ is
the random operator constructed in Section~\ref{ss:hb} by drawing
independent samples for each $(x,i) \in \Xsf \times I$.
We verify the three conditions below.

\begin{proposition} \label{p:rootsa}
    The following three claims hold:
    \begin{enumerate}
    \item $T$ is a contraction mapping on $W$ with factor $\beta \in (0,1)$,
    \item for each $t$, the random operator $\hat T_t \colon W \to W$ is $\beta$-Lipschitz continuous almost surely, and
    \item for any fixed $w \in W$, the noise variables $\{\epsilon_t(w) \coloneq
    \hat T_t w - T w\}_{t \geq 1}$ are {\sc iid} $W$-valued random elements with zero mean;
    moreover, at the fixed point $w = \bar w$,
    $\|\epsilon_t(\bar w)\| \leq b_*$ almost surely, where
    $b_* \coloneq 2\beta(\|\bar w\| + M_d)$ and
    $M_d \coloneq \max_{x \in \Xsf}|d(x)|$.
    \end{enumerate}
\end{proposition}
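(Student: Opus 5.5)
Claim (i) is already available: it is exactly the content of Lemma~\ref{l:torder}, whose proof gives $\|Tw_1 - Tw_2\| \leq \beta \|w_1 - w_2\|$ on $W$. Since $\Xsf$ is now a finite grid, $W \cong \RR^{|\Xsf|\times N}$ with the sup-norm is complete and separable, so nothing beyond that computation is required.

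For (ii), I will fix a realization of the samples $\{X'_{t,x,i}\}$ and compare $\hat T_t w_1$ and $\hat T_t w_2$ pointwise. For each $(x,i)$ the sampled successor $x' = X'_{t,x,i}$ is the same in both evaluations, and the dividend term $d(x')$ cancels. This leaves
\[
    |(\hat T_t w_1)(x,i) - (\hat T_t w_2)(x,i)| = \beta \bigl|\max_j w_1(x',j) - \max_j w_2(x',j)\bigr|.
\]
By the elementary inequality $|\max_j a_j - \max_j b_j| \leq \max_j |a_j - b_j|$ this is at most $\beta\|w_1 - w_2\|$. Taking the supremum over the finitely many $(x,i)$ gives the Lipschitz bound for every realization, hence almost surely. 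The argument mirrors the proof of Lemma~\ref{l:torder}, except that a point mass at $x'$ replaces integration against $P_i(x,\cdot)$.

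For (iii), I will argue in three steps.
\begin{enumerate}
    \item \emph{Independence and identical distribution.} The family $\{X'_{t,x,i}\}_{(x,i)}$ is independent across $t$, and at each $t$ its joint law is the product $\bigotimes_{(x,i)} P_i(x,\cdot)$, which does not depend on $t$. Since $\epsilon_t(w)$ is a fixed measurable function of this family for fixed $w$, the sequence $\{\epsilon_t(w)\}_{t \geq 1}$ is i.i.d. Measurability is automatic because $\Xsf$ is finite.
    \item \emph{Zero mean.} At each coordinate $(x,i)$, the expectation of $\beta[\max_j w(x',j) + d(x')]$ with $x' \sim P_i(x,\cdot)$ is $(Tw)(x,i)$ by the definition of $T$. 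Hence $\EE\, \epsilon_t(w) = 0$ coordinatewise, and therefore as an element of $W$.
    \item \emph{Bound at the fixed point.} Using the triangle inequality, $\|\epsilon_t(\bar w)\| \leq \|\hat T_t \bar w\| + \|T\bar w\|$. Each term is pointwise at most $\beta(\max_j |\bar w(x',j)| + |d(x')|) \leq \beta(\|\bar w\| + M_d)$; for $T\bar w$ this holds because the integrand is bounded by the same constant. Summing the two terms gives $b_* = 2\beta(\|\bar w\| + M_d)$.
\end{enumerate}

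None of these steps is a real obstacle; all are routine. The only point that needs care is (iii): the i.i.d.\ claim rests on the sampling scheme, namely fresh, independent draws at each iteration with time-invariant joint law, rather than on any property of $T$. I will state that dependence explicitly.
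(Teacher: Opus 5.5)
Your proposal is correct and follows the paper's proof essentially step for step. Claim (i) comes from Lemma~\ref{l:torder}. Claim (ii) is the realization-wise bound: the sampled successor is common to both evaluations and the dividend term cancels. Claim (iii) combines coordinatewise unbiasedness, independence of the fresh draws across $t$, and the triangle-inequality bound that gives $b_*$. Your remark that identical distribution follows from the time-invariant product law of the draws makes explicit a point the paper's one-line i.i.d.\ justification leaves implicit.
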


\begin{proof}
    
    The first claim is verified by \cref{l:torder}.
    
    Next, we verify that each $\hat T_t$ is $\beta$-Lipschitz almost surely.
    For any realization of the samples $\{X'_{t,x,i}\}$ and any
    $w_1, w_2 \in W$, every $(x,i)$ satisfies
    \begin{equation*}
    |(\hat T_t w_{1})(x,i)-(\hat T_t w_{2})(x,i)|
          =\beta \bigl|\max_j w_1(X'_{t,x,i}, j)-\max_j w_2(X'_{t,x,i}, j)\bigr|
          \leq \beta \|w_1-w_2\|.
    \end{equation*}
    Taking the supremum over $(x,i)$ gives
    $\|\hat T_t w_1 - \hat T_t w_2\| \leq \beta\|w_1 - w_2\|$ almost surely.

    Fix $w\in W$. We show that
    $\{\epsilon_t(w)\}_{t \geq 1}$ is {\sc iid}, mean-zero, and bounded in $W$.
    Recall that $\epsilon_t(w) = \hat T_t w - T w \in W$.
    For each $(x,i)$,
    \begin{equation*}
    \epsilon_t(w)(x,i)
    =\beta\Bigl[\max_j w(X'_{t,x,i},j)+d(X'_{t,x,i})\Bigr]
    -\beta \int \bigl[\max_j w(x',j)+d(x')\bigr] P_i(x, \diff x').
    \end{equation*}
    Since $\EE[(\hat T_t w)(x,i)] = (Tw)(x,i)$ for each $(x,i)$,
    we have $\EE[\epsilon_t(w)] = 0$ in $W$.
    Because the families $\{X'_{t,x,i}\}_{x,i}$ are independent across $t$,
    the $\{\epsilon_t(w)\}_{t \geq 1}$ are {\sc iid} random elements of $W$.

    It remains to verify the almost-sure bound at $w = \bar w$.
    For any $(x,i)\in\Xsf\times I$ and any realization of $X'_{t,x,i}$,
    \begin{align*}
        |(\hat T_t \bar w)(x,i)|
        &= \beta\bigl|\max_j \bar w(X'_{t,x,i},j)+d(X'_{t,x,i})\bigr|
        \leq \beta(\|\bar w\| + M_d), \\
        |(T \bar w)(x, i)| &\leq \beta(\|\bar w\| + M_d).
    \end{align*}
    Hence $|\epsilon_t(\bar w)(x,i)| \leq 2 \beta(\|\bar w\| + M_d)$.
    Taking the supremum over $(x,i)$ gives
    $\|\epsilon_t(\bar w)\| \leq 2 \beta(\|\bar w\| + M_d) = b_*$ almost surely.

    Therefore, all three claims in the proposition are verified.
\end{proof}

Proposition~\ref{p:rootsa} verifies assumptions (A1)--(A3)
of~\cite{mou2022optimal}, with
contraction modulus $\gamma = \beta$,
Lipschitz constant $L = \beta$,
and fixed-point noise bound $b_*$.
Together with Algorithm~\ref{alg:rootsa} and the admissible stepsize and
burn-in conditions stated above, all hypotheses of their
Theorem~1 are satisfied.

We apply Theorem~1 of~\cite{mou2022optimal} in conjunction with
their Appendix~A restarting scheme, which removes dependence on the
initial operator defect $\|T w_0 - w_0\|$.
The restarting procedure runs a sequence of short warm-started epochs, each
of length $2cB_0$, followed by a final epoch using the remaining sample budget.
For bounded initializations on a finite grid, the initial-defect condition
required by the restart argument is automatically satisfied.
With this restarted wrapper, for any
confidence level $1-\delta\in(0,1)$
and a sample size $M$ satisfying the minimum-size requirement
\begin{equation*}
    M \geq \frac{c}{(1-\beta)^4}\,\beta^2\, J_2^2(\Gamma,\rho_M)\,
    \log^2\!\bigl(\tfrac{M}{\delta}\bigr),
\end{equation*}
the ROOT-SA iterate $w_M$ satisfies the operator-defect bound
\begin{equation}\label{eq:rootsa_defect}
    \|T w_M - w_M\| \leq
    \frac{C_1}{\sqrt{M}}
    \Bigl[\EE[\|\mathcal{W}\|] + \sqrt{\sigma_\Gamma^2(\mathcal{W})\log(1/\delta)}\Bigr]
    + \frac{C_2}{M}
    \Bigl[J_1(\Gamma,\rho_M) + \log(1/\delta)\Bigr]
\end{equation}
with probability at least $1-\delta$.
Here $\mathcal{W}$ is the zero-mean Gaussian element in $W$ with covariance matching that of
the noise $\hat T_1 \bar w - T \bar w$,
$\sigma_\Gamma^2(\mathcal{W}) \coloneq \sup_{u \in \Gamma}\EE[\langle u, \mathcal{W}\rangle^2]$
is the maximal directional variance over a skeleton set $\Gamma$ of the dual
ball, $J_q(\Gamma,\rho_M)$ denotes the order-$q$ Dudley entropy integral
(with $\rho_M$ the pseudo-metric defined in Section~3.1.1
of~\cite{mou2022optimal}),
and $C_1, C_2 > 0$ depend on $\beta$ and $b_*$.

Because $T$ is $\beta$-contractive and $\bar w = T \bar w$, the
operator-defect bound converts immediately into a fixed-point error
bound\footnote{By the triangle inequality,
$\|w_M - \bar w\| \leq \|T w_M - w_M\| + \|T w_M - T \bar w\|
\leq \|T w_M - w_M\| + \beta\|w_M - \bar w\|$; rearranging gives
$\|w_M - \bar w\| \leq (1-\beta)^{-1}\|T w_M - w_M\|$.}:
\begin{equation}\label{eq:bound_w}
    \|w_M - \bar w\| \leq (1-\beta)^{-1} \|T w_M - w_M\|.
\end{equation}
For fixed confidence level $\delta$ and admissible tuning,
the right-hand side of \eqref{eq:bound_w}
is $O(M^{-1/2})$ with probability at least $1-\delta$.
In particular, $w_M \to \bar w$ in probability as $M \to \infty$.
Since $D$ is Lipschitz on $(W, \|\cdot\|)$, we obtain
$D w_M \to D \bar w = \bar v$
in probability.

\begin{figure}[p]
    \centering
    \includegraphics[width=0.75\textwidth]{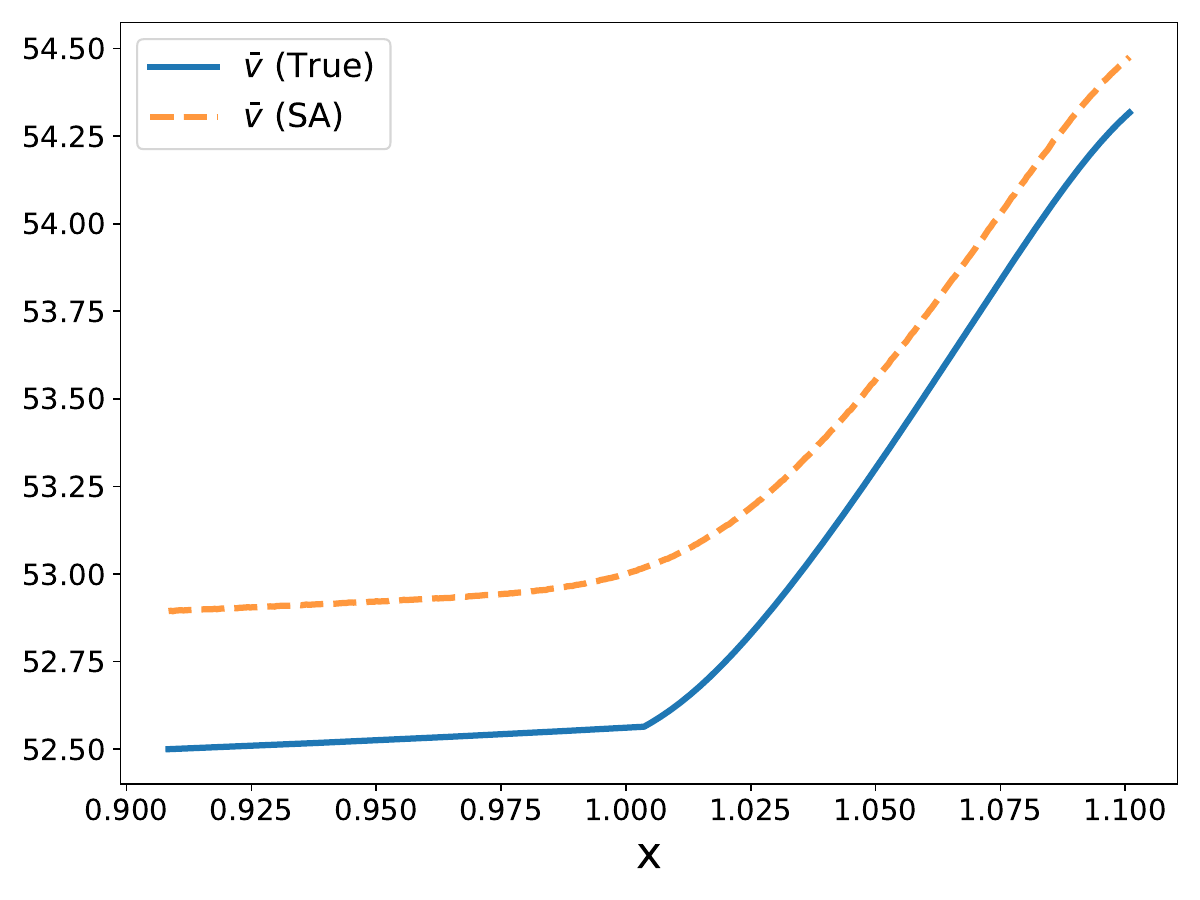}
    \caption{Evidence of max-of-means bias: ROOT-SA approximation of the fixed point $\bar v$. 
    Ground truth $\bar v$ (blue) vs. approximation using $\hat S$ (orange dashed).}
    \label{fig:rootsa_v}

    \centering
    \includegraphics[width=0.75\textwidth]{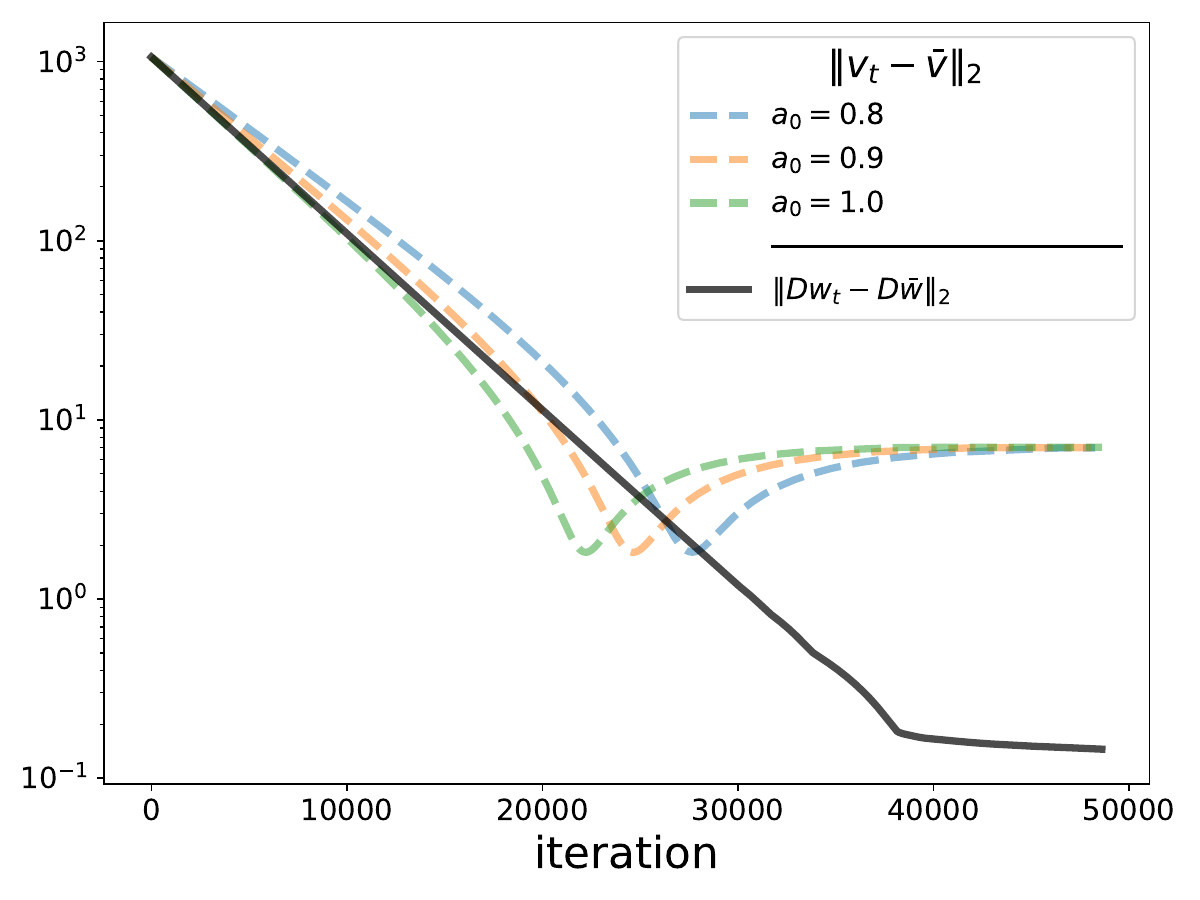}
    \caption{Convergence paths for $\hat S$ (dashed lines, various $\alpha_0$) and 
    $\hat T$ (black line, $\alpha_0 = 1.0$).}
    \label{fig:rootsa_v_path}
\end{figure}

\begin{figure}[ht]
    \includegraphics[width=0.8\textwidth]{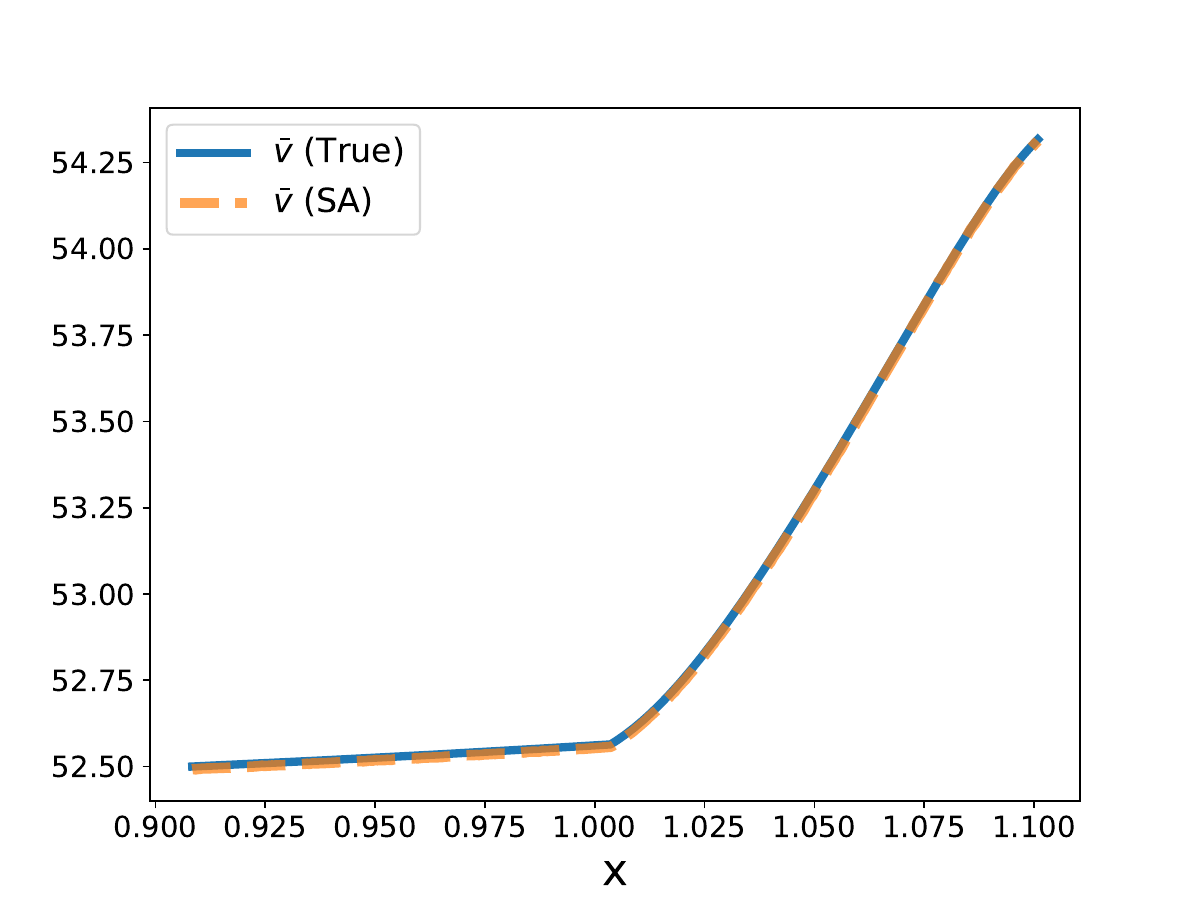}
    \caption{Comparison of the true equilibrium price function $\bar v$ calculated
    by iterating the operator $S$ and the ROOT-SA approximation $D w_M$.}
    \label{fig:rootsa_pi}
\end{figure}

\section{Upward Bias}\label{s:mc_approx}

In this section, we illustrate the claim that applying an approximation to 
equation \eqref{eq:eq_as} introduces a systematic upward bias. To see this, 
we apply the same ROOT-SA algorithm, replacing $\hat T$ with $\hat S$:
\begin{equation} \label{eq:S_hat}
    (\hat S v)(x) = \max_{i \in I} \beta\bigl[v(x_i') + d(x_i')\bigr], \qquad 
    x_i' \sim P_i \bigl(x,\cdot\bigr).
\end{equation}
By convexity of $y\mapsto\max_i y_i$ and Jensen's inequality,
\begin{equation*}
    (\EE \hat S v)(x)
    = \EE\max_i\beta\bigl[v(x_i')+d(x_i')\bigr]
    \geq \max_i\beta\EE\bigl[v(x_i')+d(x_i')\bigr]
    = (Sv)(x).
\end{equation*}

Therefore, the expected one-step operator $\EE[\hat S \cdot]$ is biased upward
relative to $S$. In our experiments, this one-step bias appears to accumulate,
pushing the iterates above the true fixed point, as illustrated in
Figure~\ref{fig:rootsa_v}, 
where we plot the solution obtained by applying ROOT-SA to $\hat S$. We can see that the 
approximation of $\bar v$ is systematically higher than the ground truth $\bar v$ 
(for more detailed discussion, see~\cite{hasselt2010double}). Similar results are also 
reported in~\cite{van2016deep}.

Moreover, we plot the convergence paths of applying ROOT-SA to $\hat S$ 
and $\hat T$, respectively, in Figure~\ref{fig:rootsa_v_path}. We can 
see that across different initial 
values of $\alpha_0$, the bias is systematic, and the deviation should
not be attributed to undertraining, given that all paths starting from different values 
of $\alpha_0$ converge to a higher value.
In contrast, the ROOT-SA algorithm applied to $\hat T$ converges to the 
true equilibrium price function $\bar v$ without the upward bias 
as shown in Figure~\ref{fig:rootsa_pi}.

\bibliographystyle{apalike}
\bibliography{qe_bib}

\end{document}